\newcommand{\LongVersion}[1]{#1}
\newcommand{\ShortVersion}[1]{}
\newcommand{\myend}{\mbox{ }\hfill$\Box$}
\newcommand{\myEnd}{\mbox{ }\hfill$\Box$}
\def\ramka#1{\fbox{\parbox{\textwidth-0.8em}{#1}}}
\newenvironment{Definition}{\begin{definition}\em}{\end{definition}}
\newcommand{\V}{\forall}
\newcommand{\E}{\exists}
\newcommand{\Dmd}{\Diamond}
\newcommand{\lDmd}[1]{\langle#1\rangle}
\newcommand{\ovl}[1]{\overline{#1}}
\newcommand{\fracc}[2]{\displaystyle{\frac{\;#1\;}{\;#2\;}}}
\newcommand{\EXPTIME}{\textsc{Exp\-Time}\xspace}
\newcommand{\NEXPTIME}{\textsc{NExp\-Time}\xspace}
\newcommand{\mM}{\mathcal{M}} 
\newcommand{\mL}{\mathcal{L}}
\def\defeq{=}
\def\eqref#1{(\ref{#1})}
\newcommand{\comment}[1]{} 
\newcommand{\CPDLreg}{CPDL$_{reg}$\xspace}
\newcommand{\ALC}{$\mathcal{ALC}$\xspace} 
\newcommand{\ALCreg}{$\mathcal{ALC}_{reg}$\xspace}
\newcommand{\SHIO}{$\mathcal{SHIO}$\xspace}
\newcommand{\HPDL}{HPDL\xspace}
\newcommand{\CHPDL}{$\mathcal{C}_{\mathrm{HPDL}}$\xspace}
\newcommand{\rAnd}{(\land)}
\newcommand{\rOr}{(\lor)}
\newcommand{\rAutB}{(aut_\Box)}
\newcommand{\rAutD}{(aut_\Dmd)}
\newcommand{\rBox}{([\mathrm{A}])}
\newcommand{\rBoxD}{([\mathrm{A}]_f)}
\newcommand{\rDmd}{(\lDmd{\mathrm{A}})}
\newcommand{\rDmdD}{(\lDmd{\mathrm{A}}_f)}
\newcommand{\rBoxQm}{(\Box_{?})}
\newcommand{\rDmdQm}{(\Dmd_{?})}
\newcommand{\rTrans}{(\mathit{trans})}
\newcommand{\rUnsat}{(\mathit{close})}
\newcommand{\rFormingState}{(\mathit{form\textrm{-}state})}
\newcommand{\rBoxTrans}{(\Box_{\mathrm{trans}})}
\newcommand{\rReexpand}{(\mathit{re\textrm{-}expand})}
\newcommand{\rReplNom}{(\mathit{repl\textrm{-}nom})}
\newcommand{\rNom}{(\mathit{nominal})}
\newcommand{\props}{\mathcal{PROP}}
\newcommand{\mindices}{\Sigma}
\newcommand{\noms}{\mathcal{O}}
\newcommand{\bsf}{\mathit{bsf}}
\newcommand{\cls}{\mathit{closure}}
\newcommand{\clsZ}{\mathit{closure}_0}
\newcommand{\Aut}{\mathbb{A}}
\newcommand{\Type}{\mathit{Type}}
\newcommand{\Status}{\mathit{Status}}
\newcommand{\Label}{\mathit{Label}}
\newcommand{\FullLabel}{\mathit{FullLabel}}
\newcommand{\RFormulas}{\mathit{Reduced}}
\newcommand{\State}{\mathit{state}}
\newcommand{\NonState}{\mathit{non\textrm{-}state}}
\newcommand{\Unsat}{\mathit{closed}}
\newcommand{\Incomplete}{\mathit{incomplete}}
\newcommand{\Expanded}{\mathit{expanded}}
\newcommand{\Unexpanded}{\mathit{unexpanded}}
\newcommand{\Null}{\mathit{null}}
\newcommand{\Blocked}{\mathit{blocked}}
\newcommand{\UnsatWrt}{\mathit{closed\textrm{-}wrt}}
\newcommand{\UnsatWrtP}[1]{\mathit{closed\textrm{-}wrt}_{_+}(#1)}
\newcommand{\Repl}{\mathit{NomRepl}}
\newcommand{\SType}{\mathit{SType}}
\newcommand{\Complex}{\mathit{complex}}
\newcommand{\Simple}{\mathit{simple}}
\newcommand{\AssSN}{\mathit{AssSugByNom}}
\newcommand{\EdgeLabels}{\mathit{EdgeLabels}}
\newcommand{\ELabels}{\mathit{ELabels}}
\def\tuple#1{(#1)}
\newcommand{\TSDR}{\mathit{TimeStampDR}}
\newcommand{\realized}{\mathit{realized}}
\renewcommand{\subsubsection}[1]{
	\bigskip
	
	\noindent
	{\bf #1:}\\[1ex]	
}
\title{ExpTime Tableaux with Global Caching for Hybrid PDL} 
\titlerunning{\EXPTIME Tableaux with Global Caching for Hybrid PDL} 
\author{Linh Anh Nguyen} 
\institute{
	Institute of Informatics, University of Warsaw \\
	Banacha 2, 02-097 Warsaw, Poland\\
	\email{nguyen@mimuw.edu.pl}\\[2ex]
\today
}
\authorrunning{L.A. Nguyen}	 
\begin{document} 
\maketitle  
\sloppy 

\begin{abstract}
	We present the first direct tableau decision procedure with the ExpTime complexity for HPDL (Hybrid Propositional Dynamic Logic). It checks whether a given ABox (a finite set of assertions) in HPDL is satisfiable. Technically, it combines global caching with checking fulfillment of eventualities and dealing with nominals. Our procedure contains enough details for direct implementation and has been implemented for the TGC2 (Tableaux with Global Caching) system. As HPDL can be used as a description logic for representing and reasoning about terminological knowledge, our procedure is useful for practical applications. 
\end{abstract}


\section{Introduction}

Propositional dynamic logic (PDL) \cite{FischerLadner79,HKT00} is one of the most well-known modal logics. It was designed for reasoning about correctness of programs, but can be modified or extended for other purposes. For example, its extension \CPDLreg with converse and regular inclusion axioms can be used as a framework for multi-agent logics~\cite{DKNS11}. As another example, the description logic \ALCreg is a variant of PDL for representing and reasoning about terminological knowledge~\cite{Schild91}. Extensions of \ALCreg were also studied by researchers, including the ones with regular inclusion axioms, inverse roles, qualified number restrictions and nominals. 
%
%

Automated reasoning in PDL and its extensions is useful for practical applications. The first tableau decision procedure for PDL was developed by Pratt~\cite{Pratt80}. It implicitly uses global caching and has the \EXPTIME (optimal) complexity. Nguyen and Sza{\l}as~\cite{NguyenS10FI} reformulated that procedure by explicitly using global caching and extended it for dealing with checking satisfiability of an ABox (a finite set of assertions) in PDL. Abate et al.~\cite{AbateGW09} gave another tableau decision procedure with global caching for PDL, which updates fulfillment of eventualities (i.e., existential star modalities) on-the-fly. Due to global caching, the procedures given in~\cite{AbateGW09,NguyenS10FI} have the \EXPTIME complexity. 

Tableaux with global caching were formally formulated by Gor{\'{e}} and Nguyen for the description logic \ALC~\cite{GoreN13} and extended for other logics. A tableau with global caching for checking satisfiability of a concept w.r.t.\ a TBox in \ALC is a rooted ``and-or'' graph, where the label of each node is a set of concepts treated as requirements to be realized for the node and each edge departing from an ``and''-node is labeled by a set of roles.\footnote{A concept is like a formula and a role is like an atomic program in PDL.} Using global caching, each node has a unique label, which means that before creating a new node we check whether there already exists a node with the same label that can be used as a proxy. It is sufficient to deterministically construct one ``and-or'' graph and update the statuses of the nodes by detecting direct clashes and propagating them backward appropriately. 
Extending tableaux with global caching for PDL~\cite{AbateGW09,NguyenS10FI}, the ability to check fulfillment of eventualities over the constructed ``and-or'' graph is essential. Dealing with ABoxes in PDL~\cite{NguyenS10FI}, an ``and-or'' graph has two kinds of nodes: complex nodes and simple nodes. The label of a complex node is an ABox (with assertions about different states and accessibility between them), while the label of a simple node is a set of formulas (about one state). Edges departing from complex ``and''-nodes lead to simple nodes and are labeled with more information. 


HPDL (Hybrid PDL) is the modal logic that extends PDL with nominals. It is more expressive than PDL and belongs to the same complexity class \EXPTIME-complete as PDL~\cite{FischerLadner79,SattlerV01} (regarding the satisfiability problem). 
Kaminski and Smolka~\cite{KaminskiS14} developed a decision procedure for HPDL. The authors used the term ``goal-directed'' to refer to the property that the search is done analytically (i.e., closely based on the input). Like the traditional tableau method for description logics, the search space used in~\cite{KaminskiS14} is an ``or''-tree of nodes that are ``and''-structures, where the ``or''-tree is generated by nondeterministic choices (using backtracking) and the ``and''-structures are demo graphs (like model graphs or Hintikka structures without statically reduced assertions). Nodes of a demo graph, called ``normal clauses'' in~\cite{KaminskiS14}, are formula sets without statically reduced formulas. Caching is done only within an ``and''-structure. This is similar to the technique used in Donini and Massacci's tableau algorithm for the description logic \ALC~\cite{donini-massacci-exptime-tableau-for-alc} and somehow equivalent to the ``anywhere blocking'' technique used in the traditional tableau method for description logics~\cite{BaaderSattler01}. Without a surprise, the decision procedure given in~\cite{KaminskiS14} for HPDL has the \NEXPTIME (non-optimal) complexity for the worst case. Devising efficient \EXPTIME decision procedures for HPDL and its extensions is claimed in~\cite{KaminskiS14} as an open problem.

In this article, we present the first tableau decision procedure with the \EXPTIME complexity for checking whether a given ABox (a finite set of assertions) in HPDL is satisfiable. Technically, it combines global caching with checking fulfillment of eventualities and the technique of dealing with nominals from our work on tableaux for the description logic \SHIO~\cite{SHIO}. 
Global caching not only guarantees the \EXPTIME complexity, but is also an important optimization technique for increasing efficiency. Our decision procedure for HPDL also uses other advanced techniques, for example, automaton-modal operators. 

In the absence of nominals or the ability to express global assumptions, the problem of checking satisfiability of an ABox is usually more general than the problem of checking satisfiability of a formula. In HPDL, the former problem is reducible to the latter. We consider the former instead of the latter due to the nature of our tableau method (which uses both complex nodes and simple nodes for tableaux). 

Our decision procedure for HPDL contains enough details for direct implementation and has been implemented for TGC2~\cite{TGC2}, which is a system based on tableaux with global caching for automated reasoning in modal and description logics. This system was designed and implemented with several optimization techniques. Regarding memory management, experiments showed that the amount of memory used by TGC2 is competitive with the ones used by the other reasoners. 
As far as we know, TGC2 is the first implemented system that can be used for reasoning in HPDL. 
We refer the reader to~\cite{TGC2} for details of the design of this system. 

The rest of this article is structured as follows. In Section~\ref{section: prel}, we present the syntax and semantics of HPDL and recall automaton-modal operators~\cite{HKT00,DKNS11}. We omit the feature of ``global assumptions'' as they can be expressed in PDL (by ``local assumptions''). In Section~\ref{section: tableaux}, we present our tableau calculus for HPDL, starting with the data structure, the tableau rules and ending with the corresponding tableau decision procedure and its properties. 
In Section~\ref{sec: example}, we present an example to illustrate our procedure. We estimate the complexity of the procedure and prove its soundness and completeness in Section~\ref{section: Proofs}. Concluding remarks are given in Section~\ref{section: conc}.


\section{Preliminaries}\label{section: prel}

\subsection{Hybrid Propositional Dynamic Logic}
\label{section: defs HPDL}

We use $\mindices$ to denote the set of {\em atomic programs}, $\props$ to denote the set of {\em propositions} (i.e., {\em atomic formulas}), and $\noms$ to denote the set of {\em nominals}. We denote elements of $\mindices$ by letters like $\sigma$ and $\varrho$, elements of $\props$ by letters like $p$ and~$q$, and elements of $\noms$ by letters like $a$ and~$b$.

A {\em Kripke model} is a~pair $\mM = (\Delta^\mM,\cdot^\mM)$, where $\Delta^\mM$ is a~set of {\em states}
and $\cdot^\mM$ is an interpretation function that maps each nominal $a \in \noms$ to an element $a^\mM$ of $\Delta^\mM$, 
each proposition $p \in \props$ to a~subset $p^\mM$ of $\Delta^\mM$ and each atomic program $\sigma \in \mindices$ to a~binary relation $\sigma^\mM$ on $\Delta^\mM$. Intuitively, $p^\mM$ is the set of states in which $p$ is true and $\sigma^\mM$ is the binary relation consisting of pairs (input$\!\_$state, output$\!\_$state) of the program~$\sigma$. 
%

{\em Formulas} and {\em programs} of the {\em base language} of \HPDL are defined by the following grammar rules, respectively, where $p \in \props$, $a \in \noms$ and $\sigma \in \mindices$:
\[
\begin{array}{rcl}
\varphi & ::= &
\top
\mid \bot
\mid p
\mid a
\mid \lnot \varphi
\mid \varphi \land \varphi
\mid \varphi \lor \varphi
\mid \varphi \to \varphi 
\mid \lDmd{\alpha}\varphi
\mid [\alpha]\varphi\\[1.0ex]
\alpha & ::= &
\sigma
\mid \alpha;\alpha
\mid \alpha \cup \alpha
\mid \alpha^*
\mid \varphi?
\end{array}
\]

We use letters like $\alpha$, $\beta$ to denote programs and letters like $\varphi$, $\psi$, $\chi$ to denote formulas. 
The intended meaning of program operators is as follows:
\begin{itemize}
	\item $\alpha;\beta$ stands for the sequential composition of $\alpha$ and $\beta$,
	\item $\alpha \cup \beta$ stands for the non-deterministic choice between $\alpha$ and $\beta$, 
	\item $\alpha^*$ stands for the repetition of $\alpha$ a non-deterministic number of times, 
	\item $\varphi?$ stands for checking whether $\varphi$ holds for the current state. 
\end{itemize}

Informally, a formula $\lDmd{\alpha}\varphi$ represents the set
of states $x$ such that the program $\alpha$ has a transition
from $x$ to a state $y$ satisfying $\varphi$. Dually, a formula
$[\alpha]\varphi$ represents the set of states $x$ from which
every transition of $\alpha$ leads to a state
satisfying~$\varphi$.
A formula $a$ (a nominal) represents the set consisting of the only state specified by~$a$. 

Formally, the interpretation function of a Kripke model $\mM$
is extended to interpret complex formulas and complex programs
as shown in Figure~\ref{fig: int-comp}. 
%
%
For a~set $\Gamma$ of formulas, we denote $\Gamma^\mM = \bigcap\{\varphi^\mM \mid$ $\varphi \in \Gamma\}$. If $w \in \varphi^\mM$ (resp.\ $w \in \Gamma^\mM$), then we say that $\varphi$ (resp.\ $\Gamma$) is {\em satisfied at} $w$ in~$\mM$. 
If there exists a Kripke model $\mM$ such that $\varphi^\mM$ (resp.\ $\Gamma^\mM$) is not empty, then $\varphi$ (resp.\ $\Gamma$) is {\em satisfiable}. 

\begin{figure}[t]
	\ramka{ \[
		\begin{array}{rclrcl}
		(\alpha;\beta)^\mM & = & \alpha^\mM \circ \beta^\mM &
		(\alpha \cup \beta)^\mM & = & \alpha^\mM \cup \beta^\mM\\
		(\alpha^*)^\mM & = & (\alpha^\mM)^* & 
		(\varphi?)^\mM & = & \{ (x,x)\mid x \in \varphi^\mM \} \\[1.5ex]
		\top^\mM & = & \Delta^\mM\;\;\; \bot^\mM = \emptyset &
		a^\mM & = & \{a^\mM\}\\
		(\lnot\varphi)^\mM & = & \Delta^\mM \setminus \varphi^\mM &
		(\varphi \to \psi)^\mM & = & (\lnot\varphi \lor \psi)^\mM \\
		(\varphi \land \psi)^\mM & = & \varphi^\mM \cap \psi^\mM 
		\quad\quad\quad &
		(\varphi \lor \psi)^\mM & = & \varphi^\mM \cup \psi^\mM\\[1.5ex]
		(\lDmd{\alpha}\varphi)^\mM & = & 
		\multicolumn{4}{l}{ \{ x \in \Delta^\mM \mid
			\E y((x,y) \in \alpha^\mM \land y \in \varphi^\mM) \}
		} \\
		([\alpha]\varphi)^\mM & = & 
		\multicolumn{4}{l}{ \{ x \in \Delta^\mM \mid
			\V y((x,y) \in \alpha^\mM \to y \in \varphi^\mM) \}
		}
		\end{array}
		\] }
	\caption{Interpretation of complex programs and complex formulas.}
	\label{fig: int-comp}
\end{figure}

An {\em assertion} is an expression of the form $a\!:\!\varphi$ or $\sigma(a,b)$. 
An {\em ABox} is a finite set of assertions. 
Let $\Null\!:\!\varphi$ stand for $\varphi$. By letters like $o$, $o_1$, $o_2$ we will denote nominals or $\Null$, and by letters like $\xi$, $\zeta$ we will denote formulas or assertions. 

We define:
\[
\begin{array}{rclcl}
\mM & \models & a\!:\!\varphi & \;\;\textrm{iff}\;\; & a^\mM \in \varphi^\mM, \\
\mM & \models & \sigma(a,b) & \;\;\textrm{iff}\;\; & (a^\mM,b^\mM) \in \sigma^\mM.
\end{array}
\]
If $\mM \models \xi$, then we say that $\mM$ {\em satisfies} $\xi$. 
We say that $\mM$ {\em satisfies} and is a {\em model} of an ABox $\Gamma$, and $\Gamma$ is {\em satisfied} in $\mM$, denoted by $\mM \models \Gamma$, if $\mM$ satisfies all assertions in~$\Gamma$. If $\Gamma$ is satisfied in some Kripke model $\mM$, then it is {\em satisfiable}.

Formulas $\varphi$ and $\psi$ are {\em equivalent}, denoted by $\varphi \equiv \psi$, if $\varphi^\mM = \psi^\mM$ for every Kripke model $\mM$. Assertions $\xi$ and $\zeta$ are {\em equivalent}, denoted $\xi \equiv \zeta$, if for every Kripke model $\mM$, $\mM \models \xi$ iff $\mM \models \zeta$.  

A formula/assertion is in the {\em negation normal form} (NNF) if it does not use $\to$ and it uses $\lnot$ only immediately before propositions or nominals. Every formula/assertion can be translated in polynomial time to an equivalent formula/assertion in NNF. From now on, by $\overline{\varphi}$ we denote the NNF of~$\lnot\varphi$. For an assertion $\xi = a\!:\!\varphi$, by $\overline{\xi}$ we denote $a\!:\!\overline{\varphi}$. An ABox is in NNF if all of its assertions are in NNF. 


\subsection{Automaton-Modal Operators}

The {\em alphabet $\Sigma(\alpha)$} of a program $\alpha$ and the {\em regular language $\mL(\alpha)$} generated by $\alpha$ are specified as follows:\footnote{Note that $\Sigma(\alpha)$ contains not only atomic programs but also expressions of the form~$(\varphi?)$, and a program $\alpha$ is a regular expression over its alphabet $\Sigma(\alpha)$.}

\[
\begin{array}{l@{\extracolsep{3em}}l}
\Sigma(\sigma) = \{\sigma\} & \mL(\sigma) =\{\sigma\}\\
\Sigma(\varphi?) = \{\varphi?\} & \mL(\varphi?) = \{\varphi?\}\\
\Sigma(\beta;\gamma) = \Sigma(\beta) \cup       \Sigma(\gamma) & \mL(\beta;\gamma) =
\mL(\beta).\mL(\gamma)\\
\Sigma(\beta \cup \gamma) = \Sigma(\beta)   \cup \Sigma(\gamma) & \mL(\beta \cup
\gamma) = \mL(\beta) \cup \mL(\gamma)\\
\Sigma(\beta^*) = \Sigma(\beta) & \mL(\beta^*) = (\mL(\beta))^*
\end{array}
\]
where for sets $M$ and $N$ of words, $M.N \defeq \{\alpha\beta \mid \alpha \in M, \beta \in N\}$, $M^0 \defeq \{\varepsilon\}$ ($\varepsilon$ denotes the empty word), $M^{n+1} \defeq M.M^n$ for $n \geq 0$, and $M^* \defeq \bigcup_{n \geq 0} M^n$.

We will use letters like $\omega$ to denote either an atomic program from $\mindices$ or a~test (of the form $\varphi?$).
A~word $\omega_1\ldots\omega_k \in \mL(\alpha)$ can be treated as the program $(\omega_1;\ldots;\omega_k)$, especially when interpreted in a~Kripke model.

Recall that a {\em finite automaton} $A$ over an alphabet $\Sigma(\alpha)$ is a tuple $\langle \Sigma(\alpha), Q, I, \delta, F\rangle$, where $Q$ is a finite set of states, $I \subseteq Q$ is the set of initial states, $\delta \subseteq Q \times \Sigma(\alpha) \times Q$ is the transition relation, and $F \subseteq Q$ is the set of accepting states.
A {\em run} of $A$ on a word $\omega_1 \ldots \omega_k$ is a finite sequence of states $q_0, q_1, \ldots, q_k$ such that $q_0 \in I$ and $\delta(q_{i-1},\omega_i,q_i)$ holds for every $1 \leq i \leq k$. It is an {\em accepting run} if $q_k \in F$.
We say that $A$ {\em accepts} a word $w$ if there exists an accepting run of $A$ on $w$. The set of words accepted by $A$ is denoted by~$\mL(A)$.

We will use the following convention:
\begin{itemize}
	\item given a~finite automaton $A$, we always assume that $A = \tuple{\Sigma_A, Q_A, I_A, \delta_A, F_A}$, 
	\item for $q \in Q_A$, we define $\delta_A(q) = \{(\omega,q') \mid (q,\omega,q') \in \delta_A\}$.
\end{itemize}

As a~finite automaton $A$ over an alphabet $\Sigma(\alpha)$ corresponds to a~program (the regular expression recognizing the same language), it is interpreted in a~Kripke model $\mM$ as follows:
\begin{equation}
	A^\mM = \bigcup \{\gamma^\mM \mid \gamma \in \mL(A)\}.\label{eq:sem-aut}
\end{equation}

For each program $\alpha$, let $\Aut_\alpha$ be a~finite automaton recognizing the regular language $\mL(\alpha)$. The automaton $\Aut_\alpha$ can be constructed from $\alpha$ in polynomial time.
We extend the base language with the auxiliary modal operators $[A,q]$ and $\lDmd{A,q}$, where $A$ is $\Aut_\alpha$ for some program $\alpha$ and $q$ is a~state of $A$.
Here, $[A,q]$ and $\lDmd{A,q}$ stand respectively for $[(A,q)]$ and $\lDmd{(A,q)}$, where $(A,q)$ is the automaton that differs from $A$ only in that $q$ is its only initial state.
We call $[A,q]$ (resp.\ $\lDmd{A,q}$) a~{\em universal} (resp.\ {\em existential}) {\em automaton-modal operator}.

In the {\em extended language} of HPDL, if $\varphi$ is a~formula, then $[A,q]\varphi$ and $\lDmd{A,q}\varphi$ are also formulas. The semantics of these formulas are defined as usual, treating $(A,q)$ as a~program with semantics specified by~\eqref{eq:sem-aut}. 
From now on, the extended language is used instead of the base language. 

Given a~Kripke model $\mM$ and a~state $x \in \Delta^\mM$, we have $x \in ([A,q]\varphi)^\mM$ (resp.\  $x \in (\lDmd{A,q}\varphi)^\mM$) iff
\begin{quote}
	$x_k \in \varphi^\mM$ for all (resp.\ some) $x_k \in \Delta^\mM$ such that there exist a~word $\omega_1\ldots\omega_k$ (with $k \geq 0$) accepted by $(A,q)$ with $(x,x_k) \in (\omega_1;\ldots;\omega_k)^\mM$.
\end{quote}
The condition $(x,x_k) \in (\omega_1;\ldots;\omega_k)^\mM$ means there exist states $x_0 = x$, $x_1, \ldots, x_{k-1}$ of $\mM$ such that, for each $1 \leq i \leq k$, if $\omega_i \in \mindices$ then $(x_{i-1},x_i) \in \omega_i^\mM$, else $\omega_i = (\psi_i?)$ for some $\psi_i$ and $x_{i-1} = x_i$ and $x_i \in \psi_i^\mM$. Clearly, $\lDmd{A,q}$ is dual to $[A,q]$ in the sense that $\lDmd{A,q}\varphi \equiv \lnot[A,q]\lnot\varphi$ for any formula~$\varphi$.


\section{A Tableau Calculus for \HPDL}\label{section: tableaux}

From now on, let $\Gamma$ be an ABox in NNF. In this section, we present a tableau calculus \CHPDL for checking whether $\Gamma$ is satisfiable. We specify the data structure, the tableau rules, the corresponding tableau decision procedure and state its properties. 

\subsection{The Data Structure}

Let $\EdgeLabels$ be the set of formulas and assertions of the form $\lDmd{\sigma}\varphi$ or $a\!:\!\lDmd{\sigma}\varphi$. 

\begin{Definition}
A {\em tableau} is a rooted graph $G = \tuple{V,E,\nu}$, where $V$ is a set of nodes, \mbox{$E \subseteq V \times V$} is a set of edges, $\nu \in V$ is the root, each node $v \in V$ has a number of attributes, and each edge $\tuple{v,w}$ may be labeled by a set $\ELabels(v,w) \subseteq \EdgeLabels$.
The attributes of a tableau node $v$ are:
\begin{itemize}
	\item $\Type(v) \in \{\State, \NonState\}$, 
	
	\item $\SType(v) \in \{\Complex, \Simple\}$, called the {\em subtype} of $v$, 
	
	\item $\Label(v)$, which is a finite set of assertions or formulas, called the {\em label} of $v$,
	
	\item $\RFormulas(v)$, which is a finite set of so called {\em reduced assertions or formulas} of~$v$,
	
	\item $\Status(v) \in \{\Unexpanded$, $\Expanded$, $\Incomplete$, $\Blocked$, $\Unsat\} \cup \{\UnsatWrt(U)$ $\mid$ $U \subseteq V$ and,  for all $u \in U$, $\Type(u) = \State$ and $\SType(u) = \Complex\}$,
	
	\item $\AssSN(v)$, which is a finite set of so called {\em assertions suggested by nominals} for~$v$, available (i.e., $\neq \Null$) only when $\SType(v) = \Complex$ and $\Type(v) = \State$, and is non-empty only when $\Status(v) = \Incomplete$, 

	\item $\Repl(v) : \noms \to \noms$, which is a partial mapping specifying replacements of nominals for $v$, called in short the {\em nominal replacement} for $v$, and is available (i.e., $\neq \Null$) only when $\SType(v) = \Complex$.	
\myend
\end{itemize}
\end{Definition}
We define 
\begin{itemize}
	\item $\FullLabel(v) = \Label(v) \cup \RFormulas(v)$ if $\SType(v) = \Simple$, 
	\item $\FullLabel(v) = \Label(v) \cup \RFormulas(v) \cup \{a\!:\!b \mid \Repl(v)(b) = a\}$ otherwise. 
\end{itemize}

We call $v$ a {\em state} if $\Type(v) = \State$, and a {\em non-state} otherwise. A state is like an ``and''-node and a non-state is like an ``or''-node, when treating a tableau as an ``and-or'' graph. 
 
A node $v$ is called a {\em complex} node if $\SType(v) = \Complex$, and a {\em simple} node otherwise. 
The label of a complex node consists of assertions, while the label of a simple node consists of formulas. 
Using terminology of description logic, a complex node is like an ABox consisting of assertions about named individuals, while a simple node is like an unnamed individual and its label consists of properties of that individual. 
The root $\nu$ is a complex non-state with $\Label(\nu) = \Gamma$. 
The assertions/formulas in the label of a node $v$ are treated as requirements to be realized for~$v$. Realizing such requirements causes the graph to be expanded or modified. 

For the intuition behind $\RFormulas(v)$, consider an example situation when $\varphi \land \psi \in \Label(v)$. To realize the requirement $\varphi \land \psi$ for $v$, we can connect $v$ to a node $w$ that differs from $v$ in that $\Label(w)$ contains $\varphi$ and $\psi$ instead of $\varphi \land \psi$ and $\RFormulas(w) = \RFormulas(v) \cup \{\varphi \land \psi\}$. In general, $\RFormulas(v)$ contains assertions or formulas that have been reduced for~$v$. 

$\Status(v)$ is called the {\em status} of $v$. Possible statuses of nodes are: $\Unexpanded$, $\Expanded$, $\Incomplete$, $\Unsat$, $\Blocked$ and $\UnsatWrt(U)$, where $U$ is a set of complex states and $\UnsatWrt(U)$ is read as ``closed w.r.t.\ any node from $U$''. 
A node $v$ may have status $\Incomplete$ only when it is a complex state, and this status means that we would like to extend the label of $v$ with the assertions from $\AssSN(v)$ as one of the possibilities. 
Informally, $\Unsat$ means ``unsatisfiable'' and $\UnsatWrt(U)$ means ``unsatisfiable w.r.t.\ any node from $U$''. By $\UnsatWrt(\ldots)$ we denote $\UnsatWrt(U)$ for some $U$, and by $\UnsatWrtP{u}$ we denote $\UnsatWrt(U)$ for some $U$ containing~$u$. When negated, e.g., in the form $\neq \UnsatWrtP{u}$ or $\notin \{\UnsatWrtP{u}, \ldots\}$, we mean the considered status is different from $\UnsatWrt(U)$ for any $U$ that contains~$u$. 
A node may have status $\Blocked$ only when it is a simple node whose label contains some nominals. 
The status $\Blocked$ can be updated only to $\Unsat$ or $\UnsatWrt(\ldots)$.

A fact $\Repl(v)(b) = a$ means that the nominal $b$ has been replaced by $a$ for the complex node $v$. For example, this can be due to an assertion $a\!:\!b$. 

An edge departing from a node $v$ is labeled if and only if $v$ is a state. 
If $\tuple{v,w} \in E$, then we call $v$ a {\em predecessor} of $w$ and $w$ a {\em successor} of~$v$. 
Let the relation ``being an {\em ancestor}'' be the reflexive-transitive closure of the relation ``being a predecessor''. We say that $v$ is a {\em descendant} of $u$ if $u$ is an ancestor of~$v$. 

A tableau is constructed with {\em global caching} in the sense that, if $v_1$ and $v_2$ are different nodes, then $\Type(v_1) \neq \Type(v_2)$ or $\SType(v_1) \neq \SType(v_2)$ or $\Label(v_1) \neq \Label(v_2)$ or $\RFormulas(v_1) \neq \RFormulas(v_2)$ or $\Repl(v_1) \neq \Repl(v_2)$. 

Connecting a node $v$ to a successor, which is created if necessary, is done by Function~\ConToSucc($v$, $type$, $sType$, $label$, $reduced$, $nomRepl$, $eLabel$) on page~\pageref{proc: ConToSucc}, where the parameters $type$, $sType$, $label$, $reduced$, $nomRepl$ specify the attributes of the successor, and $eLabel$ stands for an edge label of the connection. By applying global caching, we first check whether an existing node can be used as such a successor of $v$. If not, a new node is created and used as a successor of $v$ by calling Function \NewSucc with the same parameters. 

\begin{figure*}[t!]
	\begin{function}[H]
		\caption{ConToSucc($v, type, sType, label, reduced, nomRepl, eLabel$)\label{proc: ConToSucc}}
		\GlobalData{a rooted graph $\tuple{V,E,\nu}$.}
		\Purpose{connect a node $v$ to a successor, which is created if necessary.}
		
		\uIf{there exists a node $w \in V$ such that $\Type(w) = type$, $\SType(w) = sType$, $\Label(w) = label$, $\RFormulas(w) = reduced$ and $\Repl(w) = nomRepl$}{
			$E := E \cup \{\tuple{v,w}\}$\; 
			\lIf{$\Type(v) = \State$}{add $eLabel$ to $\ELabels(v,w)$\;}
		}
		\Else{$w := \NewSucc(v, type, sType, label, reduced, nomRepl, eLabel)$\;}
		\Return{$w$}\;
	\end{function}
	
	\medskip
	
	\begin{function}[H]
		\caption{NewSucc($v, type, sType, label, reduced, nomRepl, eLabel$)\label{proc: NewSucc}}
		\GlobalData{a rooted graph $\tuple{V,E,\nu}$.}
		\Purpose{create a new successor for $v$.}
		\LinesNumberedHidden
		add a new node $w$ to $V$\;
		$\Type(w) := type$,\ 
		$\SType(w) := sType$,\ 
		$\Status(w) := \Unexpanded$\;
		$\Label(w) := label$,\ 
		$\RFormulas(w) := reduced$,\ 
		$\Repl(w) := nomRepl$\;
		
		\lIf{$sType = \Complex$ and $type = \State$}{$\AssSN(w) := \emptyset$\;}
		
		\uIf{$v = \Null$}{
			\lForEach{nominal $a$ occurring in $\Label(w)$}{$\Repl(w)(a) := a$\;}
		}
		\Else {
			$E := E \cup \{\tuple{v,w}\}$\;
			\lIf{$\Type(v) = \State$}{$\ELabels(v,w) := \{eLabel\}$\;}
		}
		\Return{$w$}\;
	\end{function}
\end{figure*}


\subsection{Tableau Rules}

Our tableau calculus \CHPDL consists of the following tableau rules:
\begin{itemize}
	\item the static rules for expanding a non-state, 
	\item the rule $\rReplNom$ for replacing nominals in a complex non-state,
	\item the rule $\rNom$ for dealing with nominals in a simple non-state,
	\item the rule $\rReexpand$ for re-expanding a complex non-state,
	\item the rule $\rFormingState$ for forming a state,
	\item the transitional rule $\rTrans$ for expanding a state, 
	\item the rule $\rUnsat$ for updating the status of a node to $\Unsat$ or $\UnsatWrt(\ldots)$.
\end{itemize}

The applicability of a rule to a tableau is explicitly specified for the static rules. For any of the other rules, we say that it is {\em applicable} to a tableau if its execution can make changes to the tableau. 

\subsubsection{The static rules for expanding a non-state}
The static rules are written downwards, with a set of assertions/formulas above the line as the {\em premise}, which represents the label of the node to which the rule is applied, and a number of sets of assertions/formulas below the line as the {\em (possible) conclusions}, which represent the labels of the successor nodes resulted from the application of the rule.
Possible conclusions of a static rule are separated by $\mid$. If a rule is unary (i.e., with only one possible conclusion), then its only conclusion is ``firm'' and we ignore the word ``possible''.
The meaning of a static rule is that, if the premise is satisfiable, then some of the possible conclusions are also satisfiable.

\begin{table}[t!]
	\[
	\begin{array}{|c|}
	\hline
	\ \\[-1.2ex]
	\rAnd\; \fracc{X, o\!:\!(\varphi \land \psi)}{X, o\!:\!\varphi, o\!:\!\psi} \hspace{8.3em}
	\rOr\; \fracc{X, o\!:\!(\varphi \lor \psi)}{X, o\!:\!\varphi \mid X, o\!:\!\psi} \\
	\ \\[-1.2ex]
	\hline
	\ \\[-1.2ex]
	\textrm{if } \alpha \notin \mindices,\; \alpha \textrm{ is not a~test, and } I_{\Aut_\alpha} = \{q_1,\ldots,q_k\}:\\
	\ \\[-1.2ex]
	\rAutB\; \fracc{X, o\!:\![\alpha]\varphi}{X, o\!:\![\Aut_\alpha,q_1]\varphi, \ldots, o\!:\![\Aut_\alpha,q_k]\varphi}\\
	\ \\[-1.2ex]
	\;\rAutD\; \fracc{X, o\!:\!\lDmd{\alpha}\varphi}{X, o\!:\!\lDmd{\Aut_\alpha,q_1}\varphi \mid \ldots \mid X, o\!:\!\lDmd{\Aut_\alpha,q_k}\varphi}\\
	\ \\[-1.2ex]
	\hline
	\ \\[-1.2ex]
	\textrm{if } \delta_A(q) = \{(\omega_1,q_1),\ldots,(\omega_k,q_k)\} \textrm{ and } q \notin F_A : \\
	\ \\[-1.2ex]
	\rBox\; \fracc{X, o\!:\![A,q]\varphi}{X, o\!:\![\omega_1][A,q_1]\varphi, \ldots, o\!:\![\omega_k][A,q_k]\varphi}\\
	\ \\[-1.2ex]
	\rDmd\; \fracc{X, o\!:\!\lDmd{A,q}\varphi}{X, o\!:\!\lDmd{\omega_1}\lDmd{A,q_1}\varphi \mid \ldots \mid X, o\!:\!\lDmd{\omega_k}\lDmd{A,q_k}\varphi}\\
	\ \\[-1.2ex]
	\hline
	\ \\[-1.2ex]
	\textrm{if } \delta_A(q) = \{(\omega_1,q_1),\ldots,(\omega_k,q_k)\} \textrm{ and } q \in F_A :\\
	\ \\[-1.2ex]
	\rBoxD\; \fracc{X, o\!:\![A,q]\varphi}{X, o\!:\![\omega_1][A,q_1]\varphi, \ldots, o\!:\![\omega_k][A,q_k]\varphi, o\!:\!\varphi}\\
	\ \\[-1.2ex]
	\rDmdD\; \fracc{X, o\!:\!\lDmd{A,q}\varphi}{X, o\!:\!\lDmd{\omega_1}\lDmd{A,q_1}\varphi \mid \ldots \mid X, o\!:\!\lDmd{\omega_k}\lDmd{A,q_k}\varphi \mid X, o\!:\!\varphi}\\
	\ \\[-1.2ex]
	\hline
	\ \\[-1.2ex]
	\rBoxQm\; \fracc{X, o\!:\![\psi?]\varphi}{X, o\!:\!\ovl{\psi} \mid X, o\!:\!\varphi}
	\hspace{8.3em}
	\rDmdQm\; \fracc{X, o\!:\!\lDmd{\psi?}\varphi}{X, o\!:\!\psi, o\!:\!\varphi}\\
	\ \\[-1.2ex]
	\hline
	\ \\[-1.2ex]
	\rBoxTrans\; \fracc{X, a\!:\![\sigma]\varphi, \sigma(a,b)}{X, a\!:\![\sigma]\varphi, \sigma(a,b), b\!:\!\varphi}
	\\[3ex]
	\hline
	\end{array}
	\]
	\caption{The static rules of \CHPDL.\label{table: cal}}
\end{table}

We use $X$ and $Y$ to denote sets of assertions/formulas and write $X, o\!:\!\varphi$ to denote $X \cup \{o\!:\!\varphi\}$ with the assumption that $o\!:\!\varphi \notin X$. The static rules of \CHPDL are specified in Table~\ref{table: cal} as schemas. For each of them, the distinguished assertions/formulas of the premise are called the {\em principal assertions/formulas} of the rule. 
A static rule $(\rho)$ as an instance of a schema given in Table~\ref{table: cal} is {\em applicable} to a node $v$ if the following conditions hold:
\begin{itemize}
	\item $\Status(v) = \Unexpanded$ and $\Type(v) = \NonState$, 
	\item the rules $\rReplNom$ and $\rNom$ are not applicable to~$v$, 
	\item the premise of the rule is equal to $\Label(v)$,
	\item the conditions accompanied with $(\rho)$ are satisfied,
	\item if $(\rho) \neq \rBoxTrans$, then the principal assertion/formula of $(\rho)$ does not belong to $\RFormulas(v)$, else the assertion $b\!:\!\varphi$ in the conclusion does not belong to $\FullLabel(v)$.
\end{itemize}

The last condition prevents applying the rule unnecessarily, because it has been applied to an ancestor node of $v$ that corresponds to the same state in the intended Kripke model.  

If $(\rho) \neq \rBoxTrans$ is a static rule applicable to $v$, then the application is as follows:
\begin{itemize}
	\item Let $\xi$ be the principal assertion/formula of $(\rho)$.
	\item Let $X_1, \ldots, X_k$ be the possible conclusions of $(\rho)$.
	\item For each $1 \leq i \leq k$, do $\ConToSucc(v$, $\NonState$, $\SType(v)$, $X_i$, $\RFormulas(v) \cup \{\xi\}$, $\Repl(v)$, $\Null)$, which is specified on page~\pageref{proc: ConToSucc}.\footnote{Here, $\Null$ is a constant standing for ``nothing'' as in C~programming, which means the parameter is not important for this case.}
	\item $\Status(v) := \Expanded$.
\end{itemize}

\noindent
If $\rBoxTrans$ is applicable to $v$, then the application is as follows:
\begin{itemize}
	\item Let $Y$ be the conclusion of $\rBoxTrans$.
	\item $\ConToSucc(v, \NonState, \SType(v), Y, \RFormulas(v), \Repl(v), \Null)$.
	\item $\Status(v) := \Expanded$.
\end{itemize}

Applying a static rule understood as a schema to a node $v$ means applying an instance of the schema to $v$. Such an instance is chosen as follows: choose assertions/formulas from $\Label(v)$ such that they can be ``unified'' with the principal assertions/formulas in the schema, then instantiate the schema by using the substitution resulted from that unification. 

\subsubsection{The rule $\rReplNom$ for replacing nominals}
%
%
%
If $\Status(v) = \Unexpanded$ and $\Label(v)$ contains $a\!:\!b$ with $a \neq b$ then:
    \begin{enumerate}
	\item let $X$ and $Y$ be the sets obtained from $\Label(v) - \{a\!:\!b\}$ and $\RFormulas(v)$, respectively, by replacing every occurrence of $b$ with~$a$, including the ones in automata of modal operators; 
	\item $w := \ConToSucc(v,\NonState,\Complex,X,Y,\Repl(v),\Null)$;
	\item $\Repl(w)(b) := a$;
	\item for each nominal $c$ such that $\Repl(v)(c) = b$, do $\Repl(w)(c) := a$;
	\item $\Status(v) := \Expanded$;
    \end{enumerate}

\subsubsection{The rule $\rNom$ for dealing with nominals}
If $\Status(v) \neq \Unsat$, $\Type(v) = \Simple$ and there exists $a \in \Label(v)$, then:
\begin{enumerate}
	\item for each complex state $u$ such that $\Status(v) \neq \UnsatWrtP{u}$ and $\Status(u) \neq \Incomplete$ and $v$ may affect the status of the root $\nu$ via a path through $u$,\footnote{That is, there exists a path from $\nu$ to $v$ via $u$ that does not contain any node with status $\Unsat$ or $\UnsatWrtP{u}$.} do:
	\begin{enumerate}
		\item $X := \{a\!:\!\varphi \mid \varphi \in \Label(v)$ and $\varphi \neq a\}$;
		\item\label{step: UDJWS} if there exists $\xi \in X$ such that $\ovl{\xi} \in \FullLabel(u)$ then 
			\begin{enumerate}
				\item if $\Status(v)$ is of the form $\UnsatWrt(U)$ then\\ $\Status(v) := \UnsatWrt(U \cup \{u\})$, 
				\item else $\Status(v) := \UnsatWrt(\{u\})$; 
			\end{enumerate}
		\item else if $X \nsubseteq \FullLabel(u)$ then 
			\begin{enumerate}
				\item $\Status(u) := \Incomplete$; 
				\item $\AssSN(u) := X - \FullLabel(u)$;
			\end{enumerate}
	\end{enumerate}
	\item if $\Status(v) = \Unexpanded$ then $\Status(v) := \Blocked$.
\end{enumerate}

\subsubsection{The rule $\rReexpand$ for re-expanding a complex non-state}
If $\tuple{v,w} \in E$ and $\Status(w) = \Incomplete$ then:\\
(we must have that $\SType(v) = \Complex$ and $\Type(v) = \NonState$)
\begin{enumerate}
\item\label{step: JPEJS} delete the edge $\tuple{v,w}$ from $E$;
\item $X := \Label(v) \cup \AssSN(w)$;
\item $\ConToSucc(v,\NonState,\SType(v),X,\RFormulas(v),\Repl(v),\Null)$;
\item for each $\xi \in \AssSN(w)$, do 
	$\ConToSucc(v$, $\NonState$, $\SType(v)$, $\Label(v) \cup \{\ovl{\xi}\}$, $\RFormulas(v)$, $\Repl(v)$, $\Null)$.
\end{enumerate}

\subsubsection{The rule $\rFormingState$ for forming a state}
If $\Status(v) = \Unexpanded$, $\Type(v) = \NonState$ and no rule among the static rules, $\rReplNom$ and $\rNom$ is applicable to $v$, then:
\begin{enumerate}
	\item if $\SType(v) = \Complex$ then\\
		$\ConToSucc(v,\State,\Complex,\Label(v),\RFormulas(v),\Repl(v),\Null)$,
	\item else $\ConToSucc(v,\State,\Simple,\Label(v),\emptyset,\Null,\Null)$;
	\item $\Status(v) := \Expanded$.
\end{enumerate}

We need the set $\RFormulas(v)$ for the successor of $v$ in the case $\SType(v) = \Complex$ due to the situation related with the rule $\rNom$. 

\subsubsection{The transitional rule $\rTrans$ for expanding a state}
If $\Type(v) = \State$ and $\Status(v) = \Unexpanded$ then:
\begin{enumerate}
\item for each $o\!:\!\lDmd{\sigma}\varphi \in \Label(v)$ do
	\begin{enumerate}
		\item $X := \{\varphi\} \cup \{\psi \mid o\!:\![\sigma]\psi \in \Label(v)\}$;
		\item $\ConToSucc(v,\NonState,\Simple,X,\emptyset,\Null,o\!:\!\lDmd{\sigma}\varphi)$;
	\end{enumerate}	
\item $\Status(v) := \Expanded$.
\end{enumerate}


\subsubsection{The rule $\rUnsat$ for updating the status of a node}
We need the following definition before specifying the rule $\rUnsat$.

\begin{Definition}\label{def: YUDSS}
	Let $\xi \in \FullLabel(v)$ be of the form $o\!:\!\lDmd{A,q}\varphi$ or $o\!:\!\lDmd{\omega}\lDmd{A,q}\varphi$. We say that $\xi$ is {\em $\Dmd$-realizable at $v$ w.r.t.~$u$}, where $u$ is a complex state and an ancestor of $v$ such that $\Status(u) \notin \{\Unsat,\Incomplete\}$, iff the following conditions hold:
	\begin{enumerate}
		\item $\Status(v) \notin \{\Unsat$, $\UnsatWrtP{u}\}$,
		\item either $\Status(v) \in \{\Unexpanded,\Incomplete\}$ or some of the following conditions hold:
		\begin{enumerate}
			\item\label{item: YUDSS a} $\xi = o\!:\!\lDmd{A,q}\varphi$, $q \in F_A$ and $o\!:\!\varphi \in \FullLabel(v)$;
			\item\label{item: YUDSS b} $\xi = o\!:\!\lDmd{A,q}\varphi$, $(q,\omega,q') \in \delta_A$, $o\!:\!\lDmd{\omega}\lDmd{A,q'}\varphi$ belongs to $\FullLabel(v)$ and is $\Dmd$-realizable at $v$ w.r.t.~$u$; 
			\item\label{item: YUDSS c} $\xi = o\!:\!\lDmd{\psi?}\lDmd{A,q}\varphi$, $\{o\!:\!\psi,o\!:\!\lDmd{A,q}\varphi\} \subseteq \FullLabel(v)$ and $o\!:\!\lDmd{A,q}\varphi$ is $\Dmd$-realizable at $v$ w.r.t.~$u$; 
			
			\item\label{item: YUDSS e} $v$ is expanded by the tableau rule $\rDmdD$, $\xi = o\!:\!\lDmd{A,q}\varphi$ is the principal assertion/formula, and the successor $w$ of $v$ whose label is obtained from $\Label(v)$ by replacing $\xi$ with $o\!:\!\varphi$ has $\Status(w) \notin \{\Unsat,\UnsatWrtP{u}\}$;
			
			\item\label{item: YUDSS f} $v$ is expanded by the tableau rule $\rDmd$ or $\rDmdD$, $\xi = o\!:\!\lDmd{A,q}\varphi$ is the principal assertion/formula, $w$ is a successor of~$v$ and the assertion/formula $\xi' \in \Label(w)$ obtained from $\xi$ is $\Dmd$-realizable at $w$ w.r.t.~$u$;
			
			\item\label{item: YUDSS g} $v$ is expanded by the tableau rule $\rDmdQm$, $\xi = o\!:\!\lDmd{\psi?}\lDmd{A,q}\varphi$ is the principal assertion/formula, $w$ is the unique successor of $v$, and $o\!:\!\lDmd{A,q}\varphi$ is $\Dmd$-realizable at $w$ w.r.t.~$u$; 
			
			\item\label{item: YUDSS h} $v$ is expanded by the rule $\rReexpand$, $\rFormingState$ or a static tableau rule and $\xi$ is not a principal assertion/formula, and $\xi$ is $\Dmd$-realizable at a successor of $v$ w.r.t.~$u$;
			
			\item\label{item: YUDSS j} $v$ is expanded by the rule $\rTrans$, $\xi = o\!:\!\lDmd{\sigma}\lDmd{A,q}\varphi$, $w$ is a successor of $v$, $\xi \in \ELabels(v,w)$, $\lDmd{A,q}\varphi \in \Label(w)$, and $\lDmd{A,q}\varphi$ is $\Dmd$-realizable at~$w$ w.r.t.~$u$;
			
			\item\label{item: YUDSS k} there exists $a \in \Label(v)$ such that $a\!:\!\xi$ is $\Dmd$-realizable at~$u$ w.r.t.~$u$.
			\ \ \ \myEnd
		\end{enumerate}
	\end{enumerate} 
\end{Definition}

Observe that the notion of $\Dmd$-realizability is defined inductively and the conditions~\ref{item: YUDSS a} and~\ref{item: YUDSS e} correspond to the base cases.

The rule $\rUnsat$ is specified as follows: 
\begin{enumerate}
\item If $\Status(v) \neq \Unsat$ then:
    \begin{enumerate}
	\item if (there exists $o\!:\!\bot \in \Label(v)$ or $a\!:\!\lnot a \in \Label(v)$ or $\{\xi,\ovl{\xi}\} \subseteq \FullLabel(v)$)\\
	or $\Status(v) = \UnsatWrtP{v}$, 
	then $\Status(v) := \Unsat$,\footnote{As an optimization, if there exists $\xi \in \FullLabel(v)$ of the form $o\!:\!\lDmd{A,q}\varphi$ or $o\!:\!\lDmd{\omega}\lDmd{A,q}\varphi$ such that it is not $\Dmd$-realizable at $v$ w.r.t.\ a complex state $u$ that is an ancestor of $v$ and checking $\Dmd$-realizability of $\xi$ at $v$ w.r.t.~$u$ does not go through any simple non-state whose label contains a nominal, then $\Dmd$-nonrealizability of $\xi$ at $v$ does not really depend on $u$, and $\Status(v)$ can be changed to $\Unsat$.}
	\item else if $u$ is a complex state and an ancestor of $v$ such that $\Status(u) \notin \{\Unsat$, $\Incomplete\}$ and there exists $\xi \in \FullLabel(v)$ of the form $o\!:\!\lDmd{A,q}\varphi$ or $o\!:\!\lDmd{\omega}\lDmd{A,q}\varphi$ that is not $\Dmd$-realizable at $v$ w.r.t.~$u$ and $\Status(v) \neq \UnsatWrtP{u}$, then:
	    \begin{enumerate}
	   	\item if $\Status(v)$ is of the form $\UnsatWrt(U)$,\\ then $\Status(v) := \UnsatWrt(U \cup \{u\})$,
	   	\item else $\Status(v) := \UnsatWrt(\{u\})$.
	    \end{enumerate}
	\end{enumerate}
	
\item If $\Status(v) \notin \{\Unexpanded,\Unsat,\Blocked,\UnsatWrt(\ldots)\}$ and $\Type(v) = \NonState$, then:
  \begin{enumerate}
  	\item\label{item: JHCDS} if all successors of $v$ have status $\Unsat$ then $\Status(v) := \Unsat$,
  	\item\label{item: JHDWA} else if every successor of $v$ has status $\Unsat$ or $\UnsatWrt(\ldots)$ then:
  	\begin{enumerate}
  		\item let $w_1,\ldots,w_k$ be all the successors of $v$ such that, for $1 \leq i \leq k$, $\Status(w_i)$ is of the form $\UnsatWrt(U_i)$, and let $U = \bigcap_{1 \leq i \leq k} U_i$;
  		\item if $U \neq \emptyset$ then: if $\Status(v)$ is of the form $\UnsatWrt(U')$, then $\Status(v) := \UnsatWrt(U' \cup U)$, else $\Status(v) := \UnsatWrt(U)$.
  	\end{enumerate}
  \end{enumerate}
  \item If $\Status(v) \notin \{\Unexpanded,\Unsat,\Incomplete\}$ and $\Type(v) = \State$, then:
  \begin{enumerate}
  	\item\label{item: JIRSA} if $v$ has a successor $w$ with $\Status(w) = \Unsat$, then $\Status(v) := \Unsat$,
  	\item\label{item: JHREA} else if $v$ has a successor $w$ with $\Status(w) = \UnsatWrt(U)$ and $\Status(v)$ is not of the form $\UnsatWrt(U')$ with $U' \supseteq U$, then:
  	\begin{enumerate}
  		\item if $\Status(v)$ is of the form $\UnsatWrt(U')$,\\
  		then $\Status(v) := \UnsatWrt(U' \cup U)$, 
  		\item else $\Status(v) := \UnsatWrt(U)$.
  	\end{enumerate}
  \end{enumerate}
\end{enumerate}


\subsection{Checking Satisfiability}

Let $\Gamma$ be an ABox in NNF. A {\em \CHPDL-tableau} for $\Gamma$ is a tableau $G = (V,E,\nu)$ constructed as follows. At the beginning, $V := \emptyset$, $E := \emptyset$ and 
$\nu := \NewSucc(\Null$, $\NonState$, $\Complex$, $\Gamma$, $\emptyset$, $\emptyset$, $\Null)$.
Then, while $\Status(\nu) \neq \Unsat$ and there is some tableau rule $(\rho)$ applicable to some node $v$, choose such a pair $((\rho),v)$ and apply $(\rho)$ to $v$.\footnote{As an optimization, it makes sense to expand $v$ only when there may exist a path from the root to $v$ that does not contain any node with the status $\Unsat$.} Observe that the set of all assertions and formulas that may appear in the contents of the nodes of $G$ is finite. Due to global caching, $G$ is finite and can be effectively constructed. 
\LongVersion{The following theorem immediately follows from Corollaries~\ref{cor: Soundness} and~\ref{cor: Completeness}, which are given and proved in Section~\ref{section: Proofs}.}

\begin{theorem}[Soundness and Completeness]\label{theorem: HJDFS}
	Let $\Gamma$ be an ABox in NNF and $G = (V,E,\nu)$ an arbitrary \CHPDL-tableau for~$\Gamma$. Then, $\Gamma$ is satisfiable if and only if $\Status(\nu) \neq \Unsat$.
	\myEnd
\end{theorem}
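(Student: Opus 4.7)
The plan is to prove the two implications separately, as announced via Corollaries~\ref{cor: Soundness} and~\ref{cor: Completeness}. The overall strategy is standard for tableau-based decision procedures: soundness is proved by an inductive ``preservation-of-satisfiability'' argument along the construction of $G$, while completeness is proved by extracting a Kripke model from the ``open'' part of $G$.

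For the \emph{soundness} direction, I assume $\Gamma$ has a model $\mM$ and prove by induction on the number of rule applications the following invariant: if $v$ is a node whose $\FullLabel(v)$ (interpreted modulo $\Repl(v)$ when $v$ is complex) is satisfiable in $\mM$, then $\Status(v)$ never becomes $\Unsat$, and can only become $\UnsatWrt(U)$ if jointly satisfying the label at $v$ together with the label of some $u \in U$ is incompatible with $\mM$. The key lemma is local soundness of each static rule: if the premise is satisfied, then some possible conclusion is satisfied~--- a routine case analysis on the shape of the principal assertion. For $\rReplNom$ this uses $a^\mM = b^\mM$ whenever $\mM \models a\!:\!b$; for $\rNom$ one argues that any model must either satisfy or refute each assertion from the suggested set $X$ at $u$, so the subsequent $\rReexpand$ covers both possibilities. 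The rule $\rUnsat$ is then shown never to set $\Status(v)$ to $\Unsat$ when $v$ is satisfiable in $\mM$; the nontrivial case is the non-$\Dmd$-realizability clause, handled by a straightforward ``unrolling'' along Definition~\ref{def: YUDSS}: any state of $\mM$ satisfying $\xi$ produces a $\Dmd$-realizability witness in~$G$.

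For the \emph{completeness} direction, assuming $\Status(\nu) \neq \Unsat$, I build a Kripke model $\mM$ by selecting a ``good'' subgraph $G'$ of $G$: at each non-state, pick one successor whose status is neither $\Unsat$ nor $\UnsatWrtP{u}$ for the relevant complex ancestor $u$, and retain all state nodes reachable through these choices. The domain of $\mM$ consists of the state nodes of $G'$, quotiented so that any two complex states whose labels contain a common nominal are identified. The nominal $a$ is interpreted by the unique quotient-state containing $a$, each proposition $p$ by $\{v \mid p \in \FullLabel(v)\}$, and each atomic program $\sigma$ by the edges produced by $\rTrans$ together with the ABox facts $\sigma(a,b)$. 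A Hintikka-style truth lemma by induction on formula structure then shows that every $\xi \in \FullLabel(v)$ is satisfied at $v$ in $\mM$; in particular the initial ABox $\Gamma$ is satisfied at $\nu$, proving satisfiability of~$\Gamma$.

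The main obstacle is the case of existential automaton-modal formulas $\lDmd{A,q}\varphi$ in the truth lemma, since their fulfilment can only be certified by the $\Dmd$-realizability condition. The subtlety is that clause~\ref{item: YUDSS k} of Definition~\ref{def: YUDSS} allows a witness to ``jump'' through a nominal from a simple node back to a complex state, which a priori threatens circular witnessing. I plan to address this by defining a well-founded measure on realizability witnesses that combines (a) the sub-automaton reachability rank of the state $q$ of $A$, (b) the number of nominal jumps used, and (c) the reduction order on $\FullLabel$; finiteness of $G$ together with the propagation clauses in $\rUnsat$, which forbid a non-$\Dmd$-realizable eventuality from persisting at a non-$\Unsat$ node, ensure that this measure strictly decreases along any chain of appeals to Definition~\ref{def: YUDSS}. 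A secondary subtlety~--- verifying that the $\UnsatWrt(\cdot)$ bookkeeping correctly interacts with the $\Incomplete$/$\AssSN$/$\rReexpand$ mechanism~--- is handled by maintaining, as part of the soundness invariant, a precise correspondence between the statuses of a descendant and of the complex state it depends on.
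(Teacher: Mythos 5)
Your soundness argument follows the paper's route: the paper makes your ``jointly satisfying the label at $v$ with that of some $u\in U$ is compatible with $\mM$'' precise via a \emph{marking} $f$ of the simple nodes and nominals by states of $\mM$, defined relative to one satisfied complex state $u$, and then proves by induction on the moment of each status change that no marked node ever receives $\Unsat$ or $\UnsatWrtP{u}$; your ``unrolling'' of Definition~\ref{def: YUDSS} is exactly the paper's reflection of a semantic realization into a chain of realizability conditions (with Lemma~\ref{lemma: IUSNA} supplying finiteness). That half is essentially the paper's proof.

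The completeness half has two genuine gaps. First, the domain of your model is wrong where nominals are concerned. In the open subgraph there is essentially a single surviving complex state $u$, and its label speaks about \emph{several} named individuals; taking ``the state nodes of $G'$'' as domain elements and interpreting each nominal $a$ by ``the unique quotient-state containing $a$'' forces all nominals mentioned in $u$ to denote the same element, which already fails for $\Gamma=\{a\!:\!p,\;b\!:\!\lnot p\}$. The paper instead creates one domain element \emph{per nominal} (modulo $\Repl(u)$), with label $\{\varphi \mid a\!:\!\varphi \in \FullLabel(u)\}\cup\{a\}$, alongside the simple states; complex states are never domain elements. Second, your well-founded measure for fulfilling $\lDmd{A,q}\varphi$ does not decrease: the automaton state may revisit itself (e.g.\ for $\Aut_{\sigma^*}$), the number of nominal jumps in a witness is not bounded in advance, and the reduction order on $\FullLabel$ resets at every application of $\rTrans$, so at a state transition none of your three components need drop. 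The device that actually works, and the one the paper uses, is to exploit that $\Dmd$-realizability is an inductively defined (least-fixed-point) predicate: fix the sequential marking process, record the time stamp $\TSDR(\xi,v,u)$ at which each formula is first marked realizable, and define a $\Dmd$-realization as a chain along which these time stamps strictly decrease. Well-foundedness is then automatic, and the model graph is built by following such chains, interleaved with saturation paths and routed back to $u$ at nominal jumps. Without some substitute for this, your truth lemma for existential automaton-modal formulas does not go through.
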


To check satisfiability of an ABox $\Gamma$ in NNF, one can construct a \CHPDL-tableau $G = (V,E,\nu)$ for~$\Gamma$ and return ``no'' when $\Status(\nu) = \Unsat$, or ``yes'' otherwise. We call this the {\em \CHPDL-tableau decision procedure}.
%
%
The following corollary immediately follows from Corollary~\ref{cor: JDKSA}, which is given and proved in Section~\ref{section: Proofs}.

\begin{corollary}\label{cor: UDEKS}
	The \CHPDL-tableau decision procedure has the \EXPTIME complexity.
	\myEnd
\end{corollary}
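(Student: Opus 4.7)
The plan is to bound both $|V|$ in any \CHPDL-tableau $G=(V,E,\nu)$ for $\Gamma$ and the total work of constructing it by $2^{O(|\Gamma|)}$, from which \EXPTIME follows. First I would fix a polynomial-size closure $\cls(\Gamma)$ of assertions/formulas that may appear in any $\Label(v)$ or $\RFormulas(v)$. The key observation is that the automaton-modal operators $[A,q]$ and $\lDmd{A,q}$ only use automata $A=\Aut_\alpha$ for subprograms $\alpha$ occurring in $\Gamma$; there are polynomially many such $\alpha$, each $\Aut_\alpha$ has polynomially many states, and each is constructed in polynomial time. An induction over the tableau rules---Table~\ref{table: cal} together with $\rReplNom$, $\rNom$, $\rFormingState$, $\rReexpand$ and $\rTrans$---shows that every assertion/formula ever produced arises from a member of a fixed polynomial-size set by substituting nominals from $\Gamma$, so $|\cls(\Gamma)|$ is polynomial in $|\Gamma|$.

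Next I would bound $|V|\le 2^{O(|\Gamma|)}$ using global caching. Each node is uniquely determined by the tuple $(\Type(v),\SType(v),\Label(v),\RFormulas(v),\Repl(v))$; the subsets $\Label(v),\RFormulas(v)\subseteq \cls(\Gamma)$ give at most $2^{|\cls(\Gamma)|}$ options each, and $\Repl(v)$ is a partial map between the nominals occurring in $\Gamma$, giving at most $(k+1)^k$ options where $k$ is the number of such nominals---all bounded by $2^{O(|\Gamma|)}$. Hence $|V|=2^{O(|\Gamma|)}$ and $|E|\le |V|^2$ is of the same order. Each rule application, including the cache lookup in $\ConToSucc$, takes time polynomial in $|V|$; statuses evolve monotonically through a bounded lattice (the set $U$ in $\UnsatWrt(U)$ can only grow and is drawn from complex ancestor states), and checking $\Dmd$-realizability as specified in Definition~\ref{def: YUDSS} can be implemented as a least-fixpoint computation over pairs $(\text{node},\text{formula in }\cls(\Gamma))$ in time polynomial in $|V|$. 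Summing these contributions yields a total of $2^{O(|\Gamma|)}$.

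The hardest part will be controlling the interaction between $\rNom$ and $\rReexpand$. When $\rNom$ marks a complex state $u$ as $\Incomplete$ and enlarges $\AssSN(u)$, the complex non-state predecessor of the node pointing to $u$ may be re-expanded, producing fresh children whose labels strictly extend $\Label(v)$ by an assertion from $\AssSN(u)$ or its negation. I would verify that (i) these extended labels still lie inside $\cls(\Gamma)$, so the fresh children fall under the global-cache bound on $|V|$, and (ii) $\AssSN(u)$ can only grow inside a polynomial-size ambient set, so only polynomially many distinct re-expansions can be triggered at any single $u$. Together with the $2^{O(|\Gamma|)}$ bound on $|V|$, this bounds the total number of rule applications---and hence the running time---by $2^{O(|\Gamma|)}$, establishing the corollary.
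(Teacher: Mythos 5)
Your overall strategy---bound a closure set, use global caching to bound $|V|$, and bound the per-rule work---is the same as the paper's, but one step is wrong, and it is exactly the step the whole count rests on: the claim that $|\cls(\Gamma)|$ is polynomial in the size $n$ of $\Gamma$. The closure must be taken not only under the decompositions induced by the automaton-modal operators but also under the nominal replacements performed by $\rReplNom$, and iterating uniform replacements of the form ``replace every occurrence of $b$ by $a$'' over up to $n$ nominals can produce exponentially many distinct variants of a single formula. The paper's Lemma~\ref{lemma: HDKXS} accordingly proves only $|\clsZ(\Gamma)|=O(n^4)$ for the replacement-free closure, and merely $|\cls(\Gamma)|=O(2^{f(n)})$ for the full one. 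Your inference ``every formula arises from a fixed polynomial-size set by substituting nominals, so the set of all such formulas is polynomial'' is a non sequitur: there are exponentially many substitutions. With the correct exponential bound on $|\cls(\Gamma)|$, your node count $2^{|\cls(\Gamma)|}$ becomes doubly exponential and the corollary does not follow as written.

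The repair is the paper's Lemma~\ref{lemma: JKYEP}: for each node $v$ one has $\Label(v)\cup\RFormulas(v)\subseteq\Repl(u)(\clsZ(\Gamma))$ for the relevant complex node or ancestor complex state $u$, and this image has cardinality at most $|\clsZ(\Gamma)|=O(n^4)$. So every label is a subset of a \emph{polynomial-size} universe that depends on the node only through its nominal-replacement map; the count is then (number of partial maps on the nominals of $\Gamma$) times $2^{O(n^4)}$ choices for $\Label$ and $\RFormulas$, which is singly exponential. The remaining ingredients of your argument---the polynomial work per rule application including the cache lookup, the fixpoint computation of $\Dmd$-realizability, and the control of $\rReexpand$ (the paper observes that each complex non-state is re-expanded at most once, since $\Incomplete$ is assigned only to complex states while re-expansion produces non-state successors)---are sound and match, in more detail, the paper's terse justification of Corollary~\ref{cor: JDKSA}.
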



\section{An Illustrative Example}
\label{sec: example}

Consider the following ABox in NNF:
\[ \Gamma = \{a\!:\![\sigma^*]p,\; \sigma(a,b),\; b\!:\!\lDmd{(a? \cup \sigma)^*}\lnot p\}. \]
We show that $\Gamma$ is unsatisfiable by constructing a \CHPDL-tableau $G = (V,E,\nu)$ for $\Gamma$ with $\Status(\nu) = \Unsat$. In the construction, we use the following finite automata:
\[
\begin{array}{rcl}
A_1 = & \Aut_{\sigma^*} & = (\{\sigma\},\{0\},\{0\},\{(0,\sigma,0)\},\{0\}),\\
A_2 = & \Aut_{(a?\, \cup \sigma)^*} & = (\{\sigma,a?\},\{0\},\{0\},\{(0,\sigma,0),(0,a?,0)\},\{0\}),\\
A_3 = & \Aut_{(b?\, \cup \sigma)^*} & = (\{\sigma,b?\},\{0\},\{0\},\{(0,\sigma,0),(0,b?,0)\},\{0\}).
\end{array}
\]
As these automata have only one state (named 0), we will write $A_i$ instead of $(A_i,0)$, for $1 \leq i \leq 3$. 
For example, $[A_1]$ and $\lDmd{A_1}$ stand for $[A_1,0]$ and $\lDmd{A_1,0}$, respectively. 
The constructed \CHPDL-tableau $G$ is illustrated in Figure~\ref{fig: example}. 

\begin{figure}[h!]
\ramka{
\begin{center}
\begin{tabular}{c}
\xymatrix{
\nu
\ar@{-->}[d]
& v_8
\ar@{->}[r]
& v_{27}
\ar@{->}[r]
& v_{28}
\ar@{->}[r]
\ar@{->}[rd]
\ar@{->}[d]
& v_{31}
\ar@{->}[r]
& v_{39}
\ar@{->}[d]
\\
v_5
\ar@{->}[ru]
\ar@{->}[r]
\ar@{->}[d]
& v_7
\ar@{->}[r]
\ar@{->}[d]
\ar@{.>}[ldd]
& v_{18}
\ar@{->}[rd]
& v_{29}
& v_{30}
\ar@{->}[d]
& *+[F]{v_{40}}
\\
v_6
& v_{17}
\ar@{->}[r]
\ar@{->}[rd]
\ar@{->}[d]
& v_{19}
& v_{25}
\ar@{->}[d]
& *+[F]{v_{32}}
\ar@{->}[d]
\\
*+[F]{v_9}
\ar@{->}[d]
& v_{20}
\ar@{->}[d]
& v_{21}
\ar@{->}[dr]
& v_{26}
& v_{33}
\ar@{->}[d]
& *+[F]{v_{37}}
\ar@{->}[l]
\\
v_{10}
\ar@{->}[d]
& *+[F]{v_{22}}
\ar@{->}[l]
& *+[F]{v_{24}}
\ar@/^{0.8pc}/@{->}[ll]
& v_{23}
\ar@{->}[l]
& v_{34}
\ar@{->}[r]
\ar@{->}[d]
\ar@/^{3.8pc}/@{->}[lllldd]
& v_{35}
\ar@{->}[u]
\\
v_{11}
\ar@{->}[r]
\ar@{->}[rd]
\ar@{->}[d]
& v_{13}
\ar@{->}[r]
& *+[F]{v_{15}}
\ar@{->}[ull]
& {}
& v_{36}
\ar@{->}[d]
\\
v_{12}
& v_{14}
\ar@{->}[r]
& v_{16}
& {}
& v_{38}
} 
\end{tabular}
\end{center}
} 
\caption{An illustration for Section~\ref{sec: example}. The dashed edge from $\nu$ to $v_5$ represents a path \mbox{$\nu \to v_1 \to \ldots \to v_5$}. The dotted edge from $v_7$ to $v_9$ stands for a normal edge before re-expanding $v_7$. By the re-expansion, that edge is deleted and $v_7$ is connected to the newly created nodes $v_{17}$ and $v_{18}$. The nodes in rectangular frames are states, the others are non-states. The nodes $v_{10}$ -- $v_{16}$ and $v_{33}$ -- $v_{38}$ are simple nodes, the others are complex nodes. The nodes $v_{16}$ and $v_{38}$ have the status $\Blocked$.\label{fig: example}}
\end{figure}
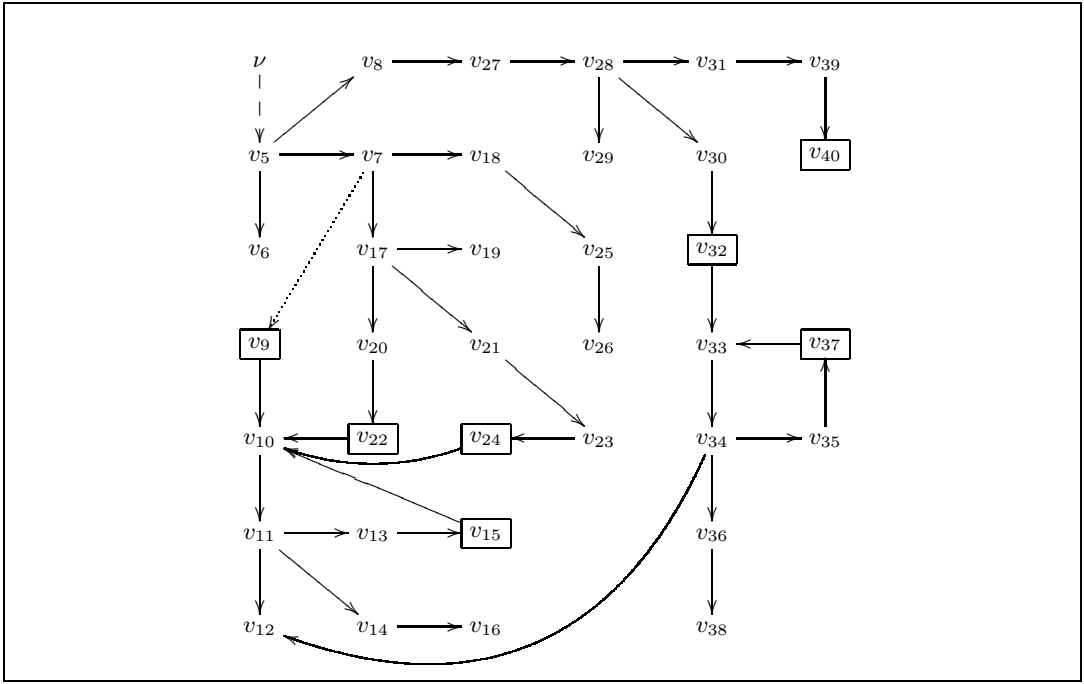

At the beginning, $G$ contains the root $\nu$ with $\Label(\nu) = \Gamma$.

Applying $\rAutB$, $\nu$ is connected to a new complex non-state $v_1$ with
\[ \Label(v_1) = \{a\!:\![A_1]p,\; \sigma(a,b),\; b\!:\!\lDmd{(a? \cup \sigma)^*}\lnot p\}. \]

Applying $\rBoxD$, $v_1$ is connected to a new complex non-state $v_2$ with
\[ \Label(v_2) = \{a\!:\![\sigma][A_1]p,\; a\!:\!p,\; \sigma(a,b),\; b\!:\!\lDmd{(a? \cup \sigma)^*}\lnot p\}. \]

Applying $\rBoxTrans$, $v_2$ is connected to a new complex non-state $v_3$ with
\[ \Label(v_3) = \Label(v_2) \cup \{b\!:\![A_1]p\}. \]

Applying $\rBoxD$, $v_3$ is connected to a new complex non-state $v_4$ with
\[ \Label(v_4) = \Label(v_2) \cup \{b\!:\![\sigma][A_1]p,\; b\!:\!p\}. \]

Applying $\rAutD$, $v_4$ is connected to a new complex non-state $v_5$ with
\[ \Label(v_5) = \{a\!:\![\sigma][A_1]p,\; a\!:\!p,\; \sigma(a,b),\; b\!:\![\sigma][A_1]p,\; b\!:\!p,\; b\!:\!\lDmd{A_2}\lnot p\}. \]

Applying $\rDmdD$, $v_5$ is connected to new complex non-states $v_6$, $v_7$, and $v_8$ with
\begin{eqnarray*}
	\Label(v_6) & = & \{a\!:\![\sigma][A_1]p,\; a\!:\!p,\; \sigma(a,b),\; b\!:\![\sigma][A_1]p,\; b\!:\!p,\; b\!:\!\lnot p\},\\
	\Label(v_7) & = & \{a\!:\![\sigma][A_1]p,\; a\!:\!p,\; \sigma(a,b),\; b\!:\![\sigma][A_1]p,\; b\!:\!p,\; b\!:\!\lDmd{\sigma}\lDmd{A_2}\lnot p\},\\
	\Label(v_8) & = & \{a\!:\![\sigma][A_1]p,\; a\!:\!p,\; \sigma(a,b),\; b\!:\![\sigma][A_1]p,\; b\!:\!p,\; b\!:\!\lDmd{a?}\lDmd{A_2}\lnot p\}.
\end{eqnarray*}

Applying $\rUnsat$, $\Status(v_6)$ is changed to $\Unsat$.

Applying $\rFormingState$, $v_7$ is connected to a new complex state $v_9$ with $\Label(v_9) = \Label(v_7)$. 
Applying $\rTrans$, $v_9$ is connected to a new simple non-state $v_{10}$ with 
\begin{eqnarray*}
	\Label(v_{10}) & = & \{\lDmd{A_2}\lnot p,\; [A_1]p\},\\
	\ELabels(v_9, v_{10}) & = & \{b\!:\!\lDmd{\sigma}\lDmd{A_2}\lnot p\}.
\end{eqnarray*}

Applying $\rBoxD$, $v_{10}$ is connected to a new simple non-state $v_{11}$ with  
\[ \Label(v_{11}) = \{\lDmd{A_2}\lnot p,\; [\sigma][A_1]p,\; p\}. \]

Applying $\rDmdD$, $v_{11}$ is connected to new simple non-states $v_{12}$, $v_{13}$, and $v_{14}$ with  
\begin{eqnarray*}
	\Label(v_{12}) & = & \{\lnot p,\; [\sigma][A_1]p,\; p\},\\
	\Label(v_{13}) & = & \{\lDmd{\sigma}\lDmd{A_2}\lnot p,\; [\sigma][A_1]p,\; p\},\\
	\Label(v_{14}) & = & \{\lDmd{a?}\lDmd{A_2}\lnot p,\; [\sigma][A_1]p,\; p\}.
\end{eqnarray*}

Applying $\rUnsat$, $\Status(v_{12})$ is changed to $\Unsat$.

Applying $\rFormingState$, $v_{13}$ is connected to a new simple state $v_{15}$ with $\Label(v_{15}) = \Label(v_{13})$. Applying $\rTrans$, $v_{15}$ is connected to the existing node $v_{10}$ with $\ELabels(v_{15}, v_{10}) = \{\lDmd{\sigma}\lDmd{A_2}\lnot p\}$.

Applying $\rDmdQm$, $v_{14}$ is connected to a new simple non-state $v_{16}$ with 
\[ \Label(v_{16}) = \{a,\; \lDmd{A_2}\lnot p,\; [\sigma][A_1]p,\; p\}. \]

Applying $\rNom$ to $v_{16}$ and the complex state $v_9$, $\Status(v_9)$ is changed to $\Incomplete$, $\AssSN(v_9)$ is set to $\{a\!:\!\lDmd{A_2}\lnot p\}$, and $\Status(v_{16})$ is changed to $\Blocked$. 

Applying $\rReexpand$ to $v_7$ and the incomplete complex state $v_9$, the edge $(v_7,v_9)$ is deleted and $v_7$ is connected to new complex non-states $v_{17}$ and $v_{18}$ with 
\begin{eqnarray*}
	\Label(v_{17}) & = & \Label(v_7) \cup \{a\!:\!\lDmd{A_2}\lnot p\},\\
	\Label(v_{18}) & = & \Label(v_7) \cup \{a\!:\![A_2]p\}.
\end{eqnarray*}

Applying $\rDmdD$, $v_{17}$ is connected to new complex non-states $v_{19}$, $v_{20}$, $v_{21}$ with
\begin{eqnarray*}
	\Label(v_{19}) & = & \Label(v_7) \cup \{a\!:\!\lnot p\},\\
	\Label(v_{20}) & = & \Label(v_7) \cup \{a\!:\!\lDmd{\sigma}\lDmd{A_2}\lnot p\},\\
	\Label(v_{21}) & = & \Label(v_7) \cup \{a\!:\!\lDmd{a?}\lDmd{A_2}\lnot p\}.
\end{eqnarray*}

Applying $\rUnsat$, $\Status(v_{19})$ is changed to $\Unsat$ (due to $a\!:\!p$ and $a\!:\!\lnot p$).

Applying $\rFormingState$, $v_{20}$ is connected to a new complex state $v_{22}$ with $\Label(v_{22}) = \Label(v_{20})$. Applying $\rTrans$, $v_{22}$ is connected to the existing node $v_{10}$ with $\ELabels(v_{22}, v_{10}) = \{a\!:\!\lDmd{\sigma}\lDmd{A_2}\lnot p,\; b\!:\!\lDmd{\sigma}\lDmd{A_2}\lnot p\}$.

Observe that the assertion $b\!:\!\lDmd{\sigma}\lDmd{A_2}\lnot p$ is not $\Dmd$-realizable at $v_{22}$ w.r.t.\ $v_{22}$. Hence, applying $\rUnsat$, $\Status(v_{22})$ is first changed to $\UnsatWrt(\{v_{22}\})$ and then to $\Unsat$, after that $\Status(v_{20})$ is also changed to $\Unsat$. 

Applying $\rDmdQm$, $v_{21}$ is connected to a new complex non-state $v_{23}$ with 
\[ \Label(v_{23}) = \Label(v_7) \cup \{a\!:\!a,\; a\!:\!\lDmd{A_2}\lnot p\}. \]

Notice that $a\!:\!\lDmd{A_2}\lnot p \in \RFormulas(v_{23})$. 
Applying $\rFormingState$, $v_{23}$ is connected to a new complex state $v_{24}$ with $\Label(v_{24}) = \Label(v_{23})$. Applying $\rTrans$, $v_{24}$ is connected to the existing node $v_{10}$ with $\ELabels(v_{24}, v_{10}) = \{b\!:\!\lDmd{\sigma}\lDmd{A_2}\lnot p\}$.

Observe that the assertion $b\!:\!\lDmd{\sigma}\lDmd{A_2}\lnot p$ is not $\Dmd$-realizable at $v_{24}$ w.r.t.\ $v_{24}$. Hence, applying $\rUnsat$, $\Status(v_{24})$ is first changed to $\UnsatWrt(\{v_{24}\})$ and then to $\Unsat$, after that the statuses of $v_{23}$, $v_{21}$ and $v_{17}$ are changed to $\Unsat$ in subsequent steps.

Applying $\rBoxD$, $v_{18}$ is connected to a new complex non-state $v_{25}$ with
\[ \Label(v_{25}) = \Label(v_7) \cup \{a\!:\![\sigma][A_2]p,\; a\!:\![a?][A_2]p\}. \]

Applying $\rBoxTrans$, $v_{25}$ is connected to new complex non-states $v_{26}$ with
\[ \Label(v_{26}) = \Label(v_{25}) \cup \{b\!:\![A_2]p\}. \]

Applying $\rUnsat$, $\Status(v_{26})$ is changed to $\Unsat$ (due to $b\!:\!\lDmd{A_2}\lnot p$ and $b\!:\![A_2]p$), and then the statuses of $v_{25}$, $v_{18}$ and $v_7$ are changed to $\Unsat$ in subsequent steps. 

Applying $\rDmdQm$, $v_8$ is connected to a new complex non-state $v_{27}$ with 
\[ \Label(v_{27}) = \{a\!:\![\sigma][A_1]p,\; a\!:\!p,\; \sigma(a,b),\; b\!:\![\sigma][A_1]p,\; b\!:\!p,\; b\!:\!a,\; b\!:\!\lDmd{A_2}\lnot p\}. \]

Applying $\rReplNom$, $v_{27}$ is connected to a new complex non-state $v_{28}$ with 
\[ \Label(v_{28}) = \{b\!:\![\sigma][A_1]p,\; b\!:\!p,\; \sigma(b,b),\; b\!:\!\lDmd{A_3}\lnot p\}. \]

Applying $\rDmdD$, $v_{28}$ is connected to new complex non-states $v_{29}$, $v_{30}$, $v_{31}$ with
\begin{eqnarray*}
	\Label(v_{29}) & = & \{b\!:\![\sigma][A_1]p,\; b\!:\!p,\; \sigma(b,b),\; b\!:\!\lnot p\},\\
	\Label(v_{30}) & = & \{b\!:\![\sigma][A_1]p,\; b\!:\!p,\; \sigma(b,b),\; b\!:\!\lDmd{\sigma}\lDmd{A_3}\lnot p\},\\
	\Label(v_{31}) & = & \{b\!:\![\sigma][A_1]p,\; b\!:\!p,\; \sigma(b,b),\; b\!:\!\lDmd{b?}\lDmd{A_3}\lnot p\}.
\end{eqnarray*}

Applying $\rUnsat$, $\Status(v_{29})$ is changed to $\Unsat$.

Notice that $b\!:\![A_1]p \in \RFormulas(v_{30})$. Applying $\rFormingState$, $v_{30}$ is connected to a new complex state $v_{32}$ with $\Label(v_{32}) = \Label(v_{30})$. Applying $\rTrans$, $v_{32}$ is connected to a new simple non-state $v_{33}$ with 
\begin{eqnarray*}
	\Label(v_{33}) & = & \{\lDmd{A_3}\lnot p,\; [A_1]p\},\\
	\ELabels(v_{32},v_{33}) & = & \{b\!:\!\lDmd{\sigma}\lDmd{A_3}\lnot p\}.
\end{eqnarray*}

Applying $\rBoxD$, $v_{33}$ is connected to a new simple non-state $v_{34}$ with
\[ \Label(v_{34}) = \{\lDmd{A_3}\lnot p,\; [\sigma][A_1]p,\; p\}. \]

Applying $\rDmdD$, $v_{34}$ is connected to the existing node $v_{12}$ and new simple non-states $v_{35}$ and $v_{36}$ with 
\begin{eqnarray*}
	\Label(v_{35}) & = & \{\lDmd{\sigma}\lDmd{A_3}\lnot p,\; [\sigma][A_1]p,\; p\},\\
	\Label(v_{36}) & = & \{\lDmd{b?}\lDmd{A_3}\lnot p,\; [\sigma][A_1]p,\; p\}.
\end{eqnarray*}

Applying $\rFormingState$, $v_{35}$ is connected to a new simple state $v_{37}$ with $\Label(v_{37}) = \Label(v_{35})$. Applying $\rTrans$, $v_{37}$ is connected to the existing node $v_{33}$ with $\ELabels(v_{37},v_{33}) = \{\lDmd{\sigma}\lDmd{A_3}\lnot p\}$.

Applying $\rDmdQm$, $v_{36}$ is connected to a new simple non-state $v_{38}$ with
\[ \Label(v_{38}) = \{b,\; \lDmd{A_3}\lnot p,\; [\sigma][A_1]p,\; p\}. \]

Applying $\rNom$, $\Status(v_{38})$ is changed to $\Blocked$. Notice that $v_{38}$ is ``compatible'' with $v_{32}$, which is the only complex state that is an ancestor of $v_{38}$, and hence, $\Status(v_{32})$ is not changed (to $\Incomplete$). 

Observe that the assertion $b\!:\!\lDmd{\sigma}\lDmd{A_3}\lnot p$ is not $\Dmd$-realizable at $v_{32}$ w.r.t.\ $v_{32}$. Hence, applying $\rUnsat$, $\Status(v_{32})$ is first changed to $\UnsatWrt(\{v_{32}\})$ and then to $\Unsat$, after that $\Status(v_{30})$ is also changed to $\Unsat$. 

Applying $\rDmdQm$, $v_{31}$ is connected to a new simple non-state $v_{39}$ with 
\[ \Label(v_{39}) = \{b\!:\![\sigma][A_1]p,\; b\!:\!p,\; \sigma(b,b),\; b\!:\!b,\; b\!:\!\lDmd{A_3}\lnot p\}. \]

Notice that $b\!:\!\lDmd{A_3}\lnot p \in \RFormulas(v_{39})$. Applying $\rFormingState$, $v_{39}$ is connected to a new complex state $v_{40}$ with $\Label(v_{40}) = \Label(v_{39})$. Applying $\rTrans$, $\Status(v_{40})$ is changed to $\Expanded$ (without being connected to any nodes).

Observe that the assertion $b\!:\!\lDmd{A_3}\lnot p$ is not $\Dmd$-realizable at $v_{40}$ w.r.t.\ $v_{40}$. Hence, applying $\rUnsat$, $\Status(v_{40})$ is first changed to $\UnsatWrt(\{v_{40}\})$ and then to $\Unsat$, after that the statuses of $v_{39}$, $v_{31}$, $v_{28}$, $v_{27}$, $v_8$, $v_5$ -- $v_1$, $\nu$ are changed to $\Unsat$ in subsequent steps.
Since $\Status(\nu) = \Unsat$, by Theorem~\ref{theorem: HJDFS}, we conclude that the given ABox $\Gamma$ is unsatisfiable. 


\section{Proofs}
\label{section: Proofs}
	
In this section, let $\Gamma$ be an ABox in NNF and $G = (V,E,\nu)$ an arbitrary \CHPDL-tableau for~$\Gamma$. 
	
\subsection{Complexity Analysis}
\label{subsection: complexity}

We define the {\em length} of a formula (resp.\ program or assertion) to be the number of occurrences of symbols in that formula (resp.\ program or assertion). We define the {\em size} of a set of formulas and assertions to be the sum of the lengths of its formulas and assertions. 

\begin{Definition}\label{def: HJSAP}
	The set of {\em basic subformulas} of $\Gamma$, denoted by $\bsf(\Gamma)$, consists of all subformulas of $\Gamma$ and their negations in NNF. The set $\clsZ(\Gamma)$ is defined to be the smallest extension of $\Gamma \cup \bsf(\Gamma)$ such that:
	\begin{enumerate}
		\item if $[\alpha]\varphi \in \bsf(\Gamma)$, $q \in Q_{\Aut_\alpha}$, $\omega$ is of the form $\sigma$ or $\psi?$ and occurs in $\alpha$, then $[\Aut_\alpha,q]\varphi$ and $[\omega][\Aut_\alpha,q]\varphi$ belong to $\clsZ(\Gamma)$;
		\item if $\lDmd{\alpha}\varphi \in \bsf(\Gamma)$, $q \in Q_{\Aut_\alpha}$, $\omega$ is of the form $\sigma$ or $\psi?$ and occurs in $\alpha$, then $\lDmd{\Aut_\alpha,q}\varphi$ and $\lDmd{\omega}\lDmd{\Aut_\alpha,q}\varphi$ belong to $\clsZ(\Gamma)$;
		\item if $\varphi \in \clsZ(\Gamma)$ and $a$ is a nominal occurring in $\Gamma$, then $a\!:\!\varphi \in \clsZ(\Gamma)$.
	\end{enumerate}
	The set $\cls(\Gamma)$ is defined to be the smallest extension of $\clsZ(\Gamma)$ such that, if $\xi \in \cls(\Gamma)$, both nominals $a$ and $b$ occur in $\Gamma$, and $\xi'$ is obtained from $\xi$ by replacing every occurrence of $b$ with $a$, including the ones in automata of modal operators, then $\xi' \in \cls(\Gamma)$.
\myend
\end{Definition}

\begin{lemma}\label{lemma: HDKXS}
Let $n$ be the size of~$\Gamma$. Then, $|\clsZ(\Gamma)| = O(n^4)$ and $|\cls(\Gamma)| = O(2^{f(n)})$ for some polynomial $f(\cdot)$.
\end{lemma}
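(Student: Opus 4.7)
The plan is to analyze both bounds combinatorially using the standard observation that the closure operations introduce only polynomially many new shapes per base formula, and then handle the nominal substitution separately.

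For the first bound, I will start by letting $n$ denote the size of $\Gamma$ and recording the obvious facts: the number of subformulas of $\Gamma$ is $O(n)$, the set of nominals (resp.\ propositions, atomic programs, subprograms) occurring in $\Gamma$ has size $O(n)$, and therefore $|\bsf(\Gamma)| = O(n)$. Since $\Aut_\alpha$ is constructed from $\alpha$ in polynomial time, we have $|Q_{\Aut_\alpha}| = O(|\alpha|) = O(n)$ and $|\Sigma_{\Aut_\alpha}| = O(|\alpha|) = O(n)$. Now I count the contributions of the three clauses defining $\clsZ(\Gamma)$:
\begin{itemize}
\item Clauses 1 and 2 each add, for every universal/existential modal subformula $[\alpha]\varphi$ or $\lDmd{\alpha}\varphi$ in $\bsf(\Gamma)$, the formulas $[\Aut_\alpha,q]\varphi$ (resp.\ $\lDmd{\Aut_\alpha,q}\varphi$) and $[\omega][\Aut_\alpha,q]\varphi$ (resp.\ $\lDmd{\omega}\lDmd{\Aut_\alpha,q}\varphi$). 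There are $O(n)$ such base subformulas, $O(n)$ choices of state $q$, and $O(n)$ choices of $\omega$, contributing $O(n^3)$ formulas in total.
\item Clause 3 then multiplies the current count by the number of nominals in $\Gamma$, which is $O(n)$.
\end{itemize}
Adding the base set gives $|\clsZ(\Gamma)| = O(n) + O(n^3) + O(n^4) = O(n^4)$.

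For the second bound, I will observe that any $\xi' \in \cls(\Gamma)$ is obtained from some $\xi \in \clsZ(\Gamma)$ by iterated applications of nominal replacements; the composite effect is a function $h : N \to N$, where $N$ is the (finite) set of nominals occurring in $\Gamma$, with $h$ extended to be the identity outside $N$. Thus the map $(\xi, h) \mapsto h(\xi)$ (applied uniformly, including inside automata of modal operators) surjects onto $\cls(\Gamma)$. Since $|N| \leq n$, the number of such functions is at most $n^n$, giving
\[
|\cls(\Gamma)| \;\leq\; |\clsZ(\Gamma)| \cdot n^n \;=\; O\bigl(n^4 \cdot 2^{n \log n}\bigr) \;=\; O\bigl(2^{f(n)}\bigr)
\]
for any polynomial $f$ with $f(n) \geq n \log n + 4 \log n + c$ (for instance $f(n) = n^2$).

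Nothing here is deep, so there is no real obstacle; the only step that requires a bit of care is verifying that the iterated substitutions generating $\cls(\Gamma)$ from $\clsZ(\Gamma)$ can indeed be summarized by a single function $N \to N$ (this is where I use that replacement is compositional and that in the definition both $a$ and $b$ must occur in $\Gamma$, so the nominals appearing in the result never escape $N$). Once that is noted, the cardinality arithmetic above is straightforward.
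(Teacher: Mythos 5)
Your proof is correct and follows essentially the same route as the paper: the same $O(n)\cdot O(n)\cdot O(n)$ count for clauses 1--2 and the $O(n)$ multiplication for clause 3, giving $O(n^4)$. For the second bound the paper merely says it ``clearly follows''; your observation that every element of $\cls(\Gamma)$ is $h(\xi)$ for some $\xi\in\clsZ(\Gamma)$ and some composite substitution $h:N\to N$, of which there are at most $n^n$, is a correct and welcome filling-in of that omitted detail.
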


\begin{proof}
The cardinality of $\bsf(\Gamma)$ is of rank $O(n)$. Consider the construction of $\clsZ(\Gamma)$ by starting from $\bsf(\Gamma)$. Applying the first and second rules, we add $O(n^3)$ formulas to $\clsZ(\Gamma)$ (note that the number of states of an automaton $\Aut_\alpha$ is linear in the length of $\alpha$). After that, applying the third rule, we add $O(n^4)$ assertions to $\clsZ(\Gamma)$. Therefore, $|\clsZ(\Gamma)| = O(n^4)$. The second assertion of the lemma clearly follows. 
\myend
\end{proof}

Let $u$ be a complex node of $G$. For a formula/assertion $\xi$, by $\Repl(u)(\xi)$ we denote the formula/assertion obtained from $\xi$ by replacing every nominal $a$ with $\Repl(u)(a)$, including the ones in automata of modal operators. For a set $X$ of formulas/assertions, we define $\Repl(u)(X) = \{\Repl(u)(\xi) \mid \xi \in X\}$. 

\begin{lemma}\label{lemma: JKYEP}
Formulas and assertions used for the construction of any \CHPDL-tableau for $\Gamma$ belong to $\cls(\Gamma)$. 
Furthermore, for every node $v$ of $G$, the cardinality of $\FullLabel(v)$ is polynomial in the size of~$\Gamma$. 
\end{lemma}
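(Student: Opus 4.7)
The plan is to prove both assertions simultaneously by induction on the number of tableau-rule applications used in the construction of~$G$. To control the polynomial cardinality bound in the presence of nominal replacement, I would strengthen the inductive hypothesis as follows: for every complex node $v$, each element of $\Label(v) \cup \RFormulas(v)$ has the form $\Repl(v)(\xi_0)$ for some $\xi_0 \in \clsZ(\Gamma)$, where the single mapping $\Repl(v)$ is applied uniformly; for every simple node $w$, each element of $\Label(w) \cup \RFormulas(w)$ arises by stripping the prefix from an assertion of the form $\Repl(u)(\xi_0)$ with $\xi_0 \in \clsZ(\Gamma)$, where $u$ is the complex state via which $w$ was created by $\rTrans$; and every edge label $\ELabels(v,w)$ is a subset of $\Label(v)$. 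Since no rule ever introduces a nominal not already occurring in $\Gamma$, every $\Repl(v)$-image lies inside $\cls(\Gamma)$ by clause~(3) of Definition~\ref{def: HJSAP} together with its closure under nominal substitution.

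For the base case, $\Label(\nu) = \Gamma \subseteq \clsZ(\Gamma)$, $\RFormulas(\nu) = \emptyset$, and \NewSucc{} sets $\Repl(\nu)$ to be the identity on the nominals of $\Gamma$, so the invariant holds. For the inductive step I would walk through each rule. The Boolean/test rules $\rAnd$, $\rOr$, $\rBoxQm$, $\rDmdQm$ and the rule $\rBoxTrans$ produce only (principal-)subformulas, hence they stay in $\bsf(\Gamma)$ up to the ambient replacement. The automaton rules $\rAutB$, $\rAutD$, $\rBox$, $\rDmd$, $\rBoxD$, $\rDmdD$ produce exactly the formulas supplied by clauses~(1) and~(2) of Definition~\ref{def: HJSAP}. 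The rules $\rReexpand$ and $\rFormingState$ only copy or merge labels while preserving $\Repl$. The rule $\rNom$ creates the set $X = \{a\!:\!\varphi \mid \varphi \in \Label(v),\ \varphi \neq a\}$; since $a \in \Label(v)$ is a nominal occurring in $\Gamma$ (by the invariant no new nominals appear), clause~(3) places $X$ in $\clsZ(\Gamma)$. The rule $\rTrans$ resets $\Repl$ to $\Null$ when creating a simple successor, matching the simple-node part of the invariant. The main case is $\rReplNom$: the label of $v$ is transformed by syntactically substituting $b$ by $a$, and $\Repl(w)$ is defined by $\Repl(w)(b) := a$ together with $\Repl(w)(c) := a$ for every $c$ with $\Repl(v)(c) = b$; a routine check shows that if $\xi = \Repl(v)(\xi_0)$ then replacing $b$ by $a$ in $\xi$ yields exactly $\Repl(w)(\xi_0)$, so the invariant carries over.

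With the invariant established, the cardinality bound follows at once. The image of $\clsZ(\Gamma)$ under any single nominal-replacement mapping has cardinality at most $|\clsZ(\Gamma)| = O(n^4)$ by Lemma~\ref{lemma: HDKXS}; hence $|\Label(v)|$ and $|\RFormulas(v)|$ are polynomial in $n$ for every node $v$. For complex $v$ the extra set $\{a\!:\!b \mid \Repl(v)(b) = a\}$ in $\FullLabel(v)$ contributes at most the number of nominals occurring in $\Gamma$, which is also polynomial. The principal difficulty will be the $\rReplNom$ case, since it is the only rule whose output genuinely forces the closure to be taken modulo nominal substitution; one must verify carefully that steps~(3)--(4) of $\rReplNom$ extend $\Repl$ in a way that mirrors the syntactic replacement on $\Label(v)$ and $\RFormulas(v)$ exactly, so that after the step the label is still the image of $\clsZ(\Gamma)$ under one uniform mapping rather than an uncontrolled composition of replacements that could blow the cardinality past polynomial.
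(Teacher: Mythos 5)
Your proposal is correct and follows essentially the same route as the paper: the paper's (very terse) proof rests on exactly your invariant, namely that $\Label(v) \cup \RFormulas(v) \subseteq \Repl(v)(\clsZ(\Gamma))$ for complex nodes and $\FullLabel(w) \subseteq \Repl(u)(\clsZ(\Gamma))$ for simple descendants $w$ of a complex state $u$, combined with the $O(n^4)$ bound on $|\clsZ(\Gamma)|$ from Lemma~\ref{lemma: HDKXS}. Your rule-by-rule verification, including the careful treatment of $\rReplNom$ showing that iterated single nominal substitutions still yield the image of $\clsZ(\Gamma)$ under one uniform replacement map, merely supplies the details the paper leaves implicit.
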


\begin{proof}
The first assertion is clear. The second one follows from Lemma~\ref{lemma: HDKXS} and the observations:
if $u$ is a complex node of $G$, then $\Label(u) \cup \RFormulas(u) \subseteq \Repl(u)(\clsZ(\Gamma))$; 
if $u$ is a complex state of $G$ and $v$ is a descendant of $u$, then $\FullLabel(v) \subseteq \Repl(u)(\clsZ(\Gamma))$. 
\myend
\end{proof}

\begin{corollary}\label{cor: JDKSA}
Let $n$ be the size of~$\Gamma$. Then, the number of nodes of $G$ is (at most) exponential in~$n$. Consequently, $G$ can be constructed in exponential time in~$n$.
\end{corollary}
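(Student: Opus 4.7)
The plan is to derive both claims directly from the global caching invariant together with Lemmas~\ref{lemma: HDKXS} and~\ref{lemma: JKYEP}, with the main work being a careful inventory of the attributes that distinguish nodes.

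First I would enumerate the admissible values of the node attributes used by global caching. By the construction, any two distinct nodes differ in at least one of $\Type$, $\SType$, $\Label$, $\RFormulas$, or $\Repl$. Let $\noms_\Gamma \subseteq \noms$ be the (polynomially many) nominals occurring in $\Gamma$. The partial mapping $\Repl(v)$ has domain and range contained in $\noms_\Gamma$, so there are at most $(|\noms_\Gamma|+1)^{|\noms_\Gamma|} = 2^{O(n\log n)}$ distinct $\Repl$ values. For a complex node $v$, Lemma~\ref{lemma: JKYEP} gives $\Label(v)\cup\RFormulas(v)\subseteq \Repl(v)(\clsZ(\Gamma))$; for a simple node $v$ with complex state ancestor $u$, we similarly have $\FullLabel(v)\subseteq \Repl(u)(\clsZ(\Gamma))$. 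By Lemma~\ref{lemma: HDKXS}, each such ground set has cardinality $O(n^4)$, so for every fixed $\Repl$ there are at most $2^{O(n^4)}$ admissible $(\Label,\RFormulas)$ pairs. Multiplying the four bounds yields $|V| = 2^{O(n^4)}$, which is exponential in $n$.

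Second, I would bound the construction time by auditing the rule applications. Each call to \ConToSucc involves comparing polynomial-size labels against existing nodes, and by the cardinality bound on $\FullLabel(v)$ this takes polynomial time per candidate and hence polynomial time amortised per created node or edge. Every rule either expands an $\Unexpanded$ node to $\Expanded$ (once), tightens a status along the monotone chain from $\Expanded$ through progressively larger $\UnsatWrt(U)$ values to $\Unsat$ or $\Blocked$, or fires $\rReexpand$; the last can trigger only as often as $\AssSN(w)$ grows for a neighbour $w$, which is polynomially bounded since $\AssSN(w)\subseteq \Repl(w)(\clsZ(\Gamma))$. Therefore the number of rule applications is polynomial per node, and the total work is bounded by $|V|$ times a polynomial in $n$, i.e.\ still exponential in $n$.

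The main obstacle will be the cost of the $\Dmd$-realizability test invoked by $\rUnsat$, since Definition~\ref{def: YUDSS} is recursive and could in principle retraverse the whole (exponential) graph at every status update. To keep the total cost exponential, I would recompute $\Dmd$-realizability not from scratch but by a monotone propagation over the graph: maintain, for each pair $(v,u)$ with $u$ an ancestor complex state, the set of $\xi\in\FullLabel(v)$ currently known to be realizable, and only refresh entries along edges whose target changed. Each such refresh is polynomial-time local work, and an entry can flip at most polynomially many times before stabilising. Combined with the node count, this yields the exponential time bound, and hence the \EXPTIME complexity of the \CHPDL-tableau decision procedure.
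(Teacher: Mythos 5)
Your proposal is correct and follows essentially the same route as the paper: the node count comes from global caching together with Lemmas~\ref{lemma: HDKXS} and~\ref{lemma: JKYEP} (you merely make the enumeration of admissible $(\Type,\SType,\Label,\RFormulas,\Repl)$ combinations explicit), and the time bound comes from counting rule applications, where the paper's only stated observation is that each complex non-state may be re-expanded at most once. Your additional accounting for the cost of $\ConToSucc$ lookups and of the $\Dmd$-realizability computation fills in details the paper leaves implicit but does not change the argument.
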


\begin{proof}
The first assertion holds because $G$ is constructed using global caching and due the facts stated by Lemma~\ref{lemma: JKYEP} and the second assertion of Lemma~\ref{lemma: HDKXS}. 
For the second assertion, just observe that each complex non-state may be re-expanded at most once. 
\myend
\end{proof}

	
	\subsection{Soundness}
	
	Our proof of soundness of the \CHPDL-tableau system relies on the notion of marking defined below.
	
	\begin{Definition}\label{def: marking}
		Let $\mM$ be a finitely branching Kripke model of $\Gamma$ and let $u$ be a complex state of $G$ such that $\Status(u) \neq \Incomplete$ and $\mM \models \FullLabel(u)$. Let $\noms' = \{a \in \noms \mid \Repl(u)(a) = a\}$ and $V' = \{v \in V \mid \SType(v) = \Simple\}$. A~{\em marking} of $G$ w.r.t.\ $\mM$ and $u$ is a function $f : O' \cup V' \to P(\Delta^\mM)$ with the intention that, for $x \in O' \cup V'$, $f(x)$ is the set of states of $\mM$ that ``correspond'' to $x$. It is defined to be the limit resulted from the following construction:
		\begin{itemize}
			\item for each $a \in \noms'$, set $f(a) := \{a^\mM\}$;
			\item for each $v \in V'$, set $f(v) := \emptyset$;
			\item initialize $U$ to a queue containing all the pairs $(a,a^\mM)$ for $a \in \noms'$ (in any order);
			\item while $U \neq \emptyset$, do:
			\begin{itemize}
				\item extract a pair $(x,y)$ from $U$;
				\item if $x \in \noms'$, then:
				\begin{itemize}
					\item for every $v \in V'$ such that $(u,v) \in E$, every $x\!:\!\lDmd{\sigma}\varphi \in \ELabels(u,v)$, and every $z \in (\Label(v))^\mM$ such that $(y,z) \in \sigma^\mM$,	 add $z$ to $f(v)$ and $(v,z)$ to~$U$;
				\end{itemize} 		    
				\item else if $\Type(x) = \State$, then:
				\begin{itemize}
					\item for every $v \in V'$ such that $(x,v) \in E$, every $\lDmd{\sigma}\varphi \in \ELabels(x,v)$, and every $z \in (\Label(v))^\mM$ such that $(y,z) \in \sigma^\mM$, add $z$ to $f(v)$ and $(v,z)$ to~$U$;
				\end{itemize} 
				\item else if there exists $a \in \Label(x)$, then:
				\begin{itemize}
					\item add $y$ to $f(a)$ and $(a,y)$ to~$U$;
				\end{itemize}
				\item else:
				\begin{itemize}
					\item for every $v \in V'$ such that $(x,v) \in E$ and $y \in (\Label(v))^\mM$, add $y$ to $f(v)$ and $(v,y)$ to~$U$.
					\myend
				\end{itemize} 
			\end{itemize}
		\end{itemize}	
	\end{Definition}
	
	\begin{lemma}\label{lemma: IUSNA}
	Every path consisting of only non-states in $G$ is finite.
	\end{lemma}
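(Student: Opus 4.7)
The plan is to attach a well-founded lexicographic measure to each non-state and check that it strictly decreases along every edge between two non-states. Inspecting the tableau rules shows that such an edge can only be created by one of the static rules, by $\rReplNom$, or by $\rReexpand$: $\rFormingState$ always produces a state successor, $\rTrans$ is not applicable to non-states, and $\rNom$ and $\rUnsat$ do not add edges.

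By Lemma~\ref{lemma: JKYEP}, for every node $v$ we have $\Label(v), \RFormulas(v) \subseteq \cls(\Gamma)$, a finite set by Lemma~\ref{lemma: HDKXS}, and $\Repl(v)$ (when defined) is a self-map of the finite set of nominals occurring in $\Gamma$. First I would verify, by a short induction on rule applications, the invariant that $\Repl(v)$ is idempotent and that every nominal still appearing in $\Label(v)$ is a fixed point of $\Repl(v)$; idempotency implies that $\mathrm{image}(\Repl(v))$ coincides with its set of fixed points. Then I define
\[ \mu(v) \;=\; \bigl(|\mathrm{image}(\Repl(v))|,\; -|\RFormulas(v)|,\; -|\Label(v)|\bigr), \]
ordered lexicographically (with $|\mathrm{image}(\Null)|$ conventionally set to $0$ for simple nodes). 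This measure is bounded below by $(0,-|\cls(\Gamma)|,-|\cls(\Gamma)|)$ and is therefore well founded.

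It remains to check $\mu(w) <_{\mathrm{lex}} \mu(v)$ for each non-state-to-non-state edge $v \to w$. For a static rule other than $\rBoxTrans$, the principal formula is moved into $\RFormulas(w)$ and by applicability was not already in $\RFormulas(v)$; since $\Repl$ is unchanged, $\mu$ strictly decreases in the second coordinate. For $\rBoxTrans$, $\Repl$ and $\RFormulas$ are preserved while $\Label(w) = \Label(v) \cup \{b\!:\!\varphi\}$ with $b\!:\!\varphi \notin \FullLabel(v)$, giving a strict decrease in the third coordinate. For $\rReexpand$, the target $w'$ of the deleted edge is a complex state produced from $v$ via $\rFormingState$, so $\FullLabel(w') = \FullLabel(v)$; the way $\rNom$ sets $\AssSN$ then guarantees $\AssSN(w') \cap \FullLabel(v) = \emptyset$ and $\ovl{\xi} \notin \FullLabel(v)$ for every $\xi \in \AssSN(w')$, so both types of new successor have $\Label$ strictly enlarged, yielding a strict decrease in the third coordinate. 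Finally, for $\rReplNom$, the premise $a\!:\!b \in \Label(v)$ with $a \neq b$ together with the invariant forces $\Repl(v)(b) = b$; the rule redirects $b$ and every nominal previously mapped to $b$ to $a$, removing $b$ from the image without introducing any new fixed point, so the first coordinate of $\mu$ strictly decreases.

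I expect the main technical obstacle to be the $\rReplNom$ case: establishing the idempotency invariant by induction on the rules, and pinning down that exactly one element ($b$) leaves $\mathrm{image}(\Repl)$ while none joins it. The remainder of the argument is then a routine appeal to well-foundedness of the lexicographic order.
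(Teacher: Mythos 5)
Your proof is correct and takes essentially the same route as the paper's (very terse) argument, which rests on the same three observations: along any non-state-to-non-state edge either the nominal-replacement component strictly drops, or $\RFormulas$ strictly grows, or the label strictly grows, all within the finite set $\cls(\Gamma)$. Your explicit lexicographic measure and the case analysis for $\rReplNom$ and $\rReexpand$ merely make precise the ordering that the paper leaves implicit.
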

	
	\begin{proof}
	This lemma follows from the following observations:
	\begin{itemize}
		\item If a non-state $w$ is a successor of a non-state $v$ then $\RFormulas(w) \supset \RFormulas(v)$ or $\FullLabel(w) \supset \FullLabel(v)$ or the number of nominals occurring in $\Label(w)$ is smaller than the number of nominals occurring in $\Label(v)$. 
		\item For any node $w$, $\RFormulas(w)$ and $\FullLabel(w)$ are subsets of the finite set $\cls(\Gamma)$.
		\myend
	\end{itemize}
	\end{proof}
	
	\begin{lemma}\label{lemma: IDJSO}
		Let $\mM$ be a finitely branching Kripke model of $\Gamma$ and $u$ a complex state of $G$ such that $\Status(u) \neq \Incomplete$ and $\mM \models \FullLabel(u)$. Then $\Status(u) \neq \Unsat$.
	\end{lemma}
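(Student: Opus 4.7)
The plan is to argue by contradiction. Assume $\Status(u) = \Unsat$, and use the marking $f$ from Definition~\ref{def: marking} to build a witness that blocks the $\rUnsat$ rule from ever promoting $u$'s status to $\Unsat$. First I would define a node $v$ of $G$ to be \emph{good (with respect to $\mM$, $u$, $f$)} if either (i) $v$ is a complex descendant of $u$ and $\mM \models \FullLabel(v)$ (reading assertions via $a \mapsto a^\mM$), or (ii) $v$ is simple with $f(v) \neq \emptyset$ and every $y \in f(v)$ satisfies $\Label(v) \cup \RFormulas(v)$ in $\mM$. By the hypothesis on $u$, $u$ itself is good. The goal is to prove: every good node has status different from $\Unsat$ and different from $\UnsatWrtP{u}$.

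Next, I would fix an (eventual) hypothetical application of $\rUnsat$ that first promotes some good $v$ to $\Unsat$ or to $\UnsatWrtP{u}$, and case-split on the clause of $\rUnsat$ responsible. The direct-clash clause (presence of $o\!:\!\bot$, $a\!:\!\lnot a$, or $\{\xi,\ovl{\xi}\} \subseteq \FullLabel(v)$, or $\Status(v) = \UnsatWrtP{v}$) is immediately impossible because $\mM$ satisfies $\FullLabel(v)$. For the propagation clauses through successors, I would argue rule-by-rule that at least one successor of $v$ is good, contradicting minimality: for the static rules $\rAnd$, $\rOr$, $\rAutB$, $\rAutD$, $\rBox$, $\rDmd$, $\rBoxD$, $\rDmdD$, $\rBoxQm$, $\rDmdQm$, $\rBoxTrans$, the model semantics directly pick out a satisfied possible conclusion; for $\rReplNom$ applied with principal assertion $a\!:\!b$, we have $a^\mM = b^\mM$ so the renamed successor is good; for $\rFormingState$ the unique successor inherits the label and is good; for $\rTrans$ the marking $f$ is constructed precisely so that every $o\!:\!\lDmd{\sigma}\varphi \in \Label(v)$ yields a simple successor $w$ with $f(w) \neq \emptyset$, hence good; for $\rReexpand$ at a non-state with child $w$ of status $\Incomplete$, the model either satisfies all of $\AssSN(w)$ (first new successor is good) or refutes some $\xi \in \AssSN(w)$ (a successor carrying $\ovl{\xi}$ is good).

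The delicate obstacle is the $\UnsatWrtP{u}$ clause, triggered by non-$\Dmd$-realizability of some existential $\xi \in \FullLabel(v)$ of the form $o\!:\!\lDmd{A,q}\varphi$ or $o\!:\!\lDmd{\omega}\lDmd{A,q}\varphi$. I would prove the complementary lemma: if $v$ is good and $\xi$ is satisfied in $\mM$ at the witness picked by $f$ (or at $a^\mM$ when $o = a$), then $\xi$ is $\Dmd$-realizable at $v$ w.r.t.~$u$. The argument is by induction on the length of a shortest accepting computation in $\mM$ witnessing $\xi$, matching each clause of Definition~\ref{def: YUDSS} to a corresponding step in that computation (the base case~\ref{item: YUDSS a} corresponds to the empty word through a final state; clauses~\ref{item: YUDSS e}--\ref{item: YUDSS j} correspond to one step of the underlying rule expansion; clause~\ref{item: YUDSS k} handles the jump from a simple node containing $a$ to the complex ancestor representing $a$, which by construction of $f$ and by goodness of $u$ is consistent). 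The subtlety here is the interaction with $\rNom$: if $v$ is simple, good, and $a \in \Label(v)$, then in $\mM$ the state $a^\mM$ satisfies $\Label(v)$ shifted to $a\!:\!\cdot$; hence for every ancestor complex state $u'$ with $\mM \models \FullLabel(u')$, the $\rNom$ rule applied to $(v,u')$ cannot enter the $\UnsatWrt$-branch at step~\ref{step: UDJWS}, so $\Status(v)$ never acquires $u'$ in its $\UnsatWrt$ set; in particular it never acquires $u$. When the $\rNom$ rule enters the "incomplete" branch, it is only $u'$'s status that changes to $\Incomplete$, which is excluded for $u$ by hypothesis on $u$.

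The hardest step, and the one I would write in full detail, is the combination of $\rReexpand$ and $\rNom$: re-expansion redirects edges and may duplicate goodness across several branches, and one must verify that the marking $f$ (which is defined once against $u$) continues to witness goodness at the new non-state successors created by $\rReexpand$. This follows because re-expansion only changes $v$'s children while preserving $\Label(v)$, $\RFormulas(v)$ and $\Repl(v)$, so $\mM \models \FullLabel(v)$ is unaffected, and the case split on $\AssSN(w)$ is exhaustive in $\mM$. Putting the induction together, no good node can obtain status $\Unsat$ or $\UnsatWrtP{u}$; applied to $v = u$ this contradicts the assumption $\Status(u) = \Unsat$, completing the proof.
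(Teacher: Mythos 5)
Your proposal matches the paper's proof in all essentials: the paper defines the set $V'' = \{u\} \cup \{w \text{ simple} \mid f(w) \neq \emptyset\}$ (your ``good'' nodes) and proves, by induction on the moment of each status change, that no node of $V''$ ever receives status $\Unsat$ or $\UnsatWrtP{u}$, handling the direct-clash, $\rNom$, successor-propagation and $\Dmd$-realizability clauses exactly as you outline. The one refinement the paper makes explicit is that, when showing a model-satisfied eventuality is $\Dmd$-realizable, the induction on the length of the witnessing computation in $\mM$ must be combined with the finiteness of paths of non-states in $G$ (Lemma~\ref{lemma: IUSNA}), since clauses such as~\ref{item: YUDSS h} of Definition~\ref{def: YUDSS} advance through the graph without shortening the model-side computation.
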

	
	\begin{proof}
		Let $f : O' \cup V' \to P(\Delta^\mM)$ a marking of $G$ w.r.t.\ $\mM$ and $u$, where $\noms' = \{a \in \noms \mid \Repl(u)(a) = a\}$ and $V' = \{v \in V \mid \SType(v) = \Simple\}$. 		
		Let $V'' = \{u\} \cup \{w \in V' \mid f(w) \neq \emptyset\}$. 
		We prove that, if the status of a node $v \in V$ is changed to $\Unsat$ or $\UnsatWrtP{u}$, then $v \notin V''$, by induction on that moment. 
		
		Recall that $\mM \models \FullLabel(u)$ and observe that, if $w \in V'$ and $z \in f(w)$, then $z \in (\Label(w))^\mM$ and thus $z \in (\FullLabel(w))^\mM$. Hence, if $w \in V''$, then $\FullLabel(w)$ is satisfiable. 
		
		If $\Status(v)$ is changed to $\Unsat$ by the rule $\rUnsat$ because there exists $o\!:\!\bot \in \Label(v)$ or $a\!:\!\lnot a \in \Label(v)$ or $\{\xi,\ovl{\xi}\} \subseteq \FullLabel(v)$, then $\FullLabel(v)$ is unsatisfiable and hence $v \notin V''$. 
		
		If $v \in V''$ and $\Status(v)$ was changed to $\Unsat$ by the rule $\rUnsat$ because $\Status(v) = \UnsatWrtP{v}$, then $v$ must be a complex state and thus $v = u$, and by the inductive assumption, $v \notin V''$, a contradiction. 
		
		Consider the case when $Status(v)$ is changed to $\UnsatWrtP{u}$ by the rule $\rNom$ and, for the sake of contradiction, assume that $v \in V''$. Thus, $v \in V'$ and $f(v) \neq \emptyset$. Hence, $\FullLabel(v)$ is satisfied at a state in $\mM$, in particular, $\mM \models \xi$ (where $\xi$ is the assertion mentioned in the rule $\rNom$), which contradicts the facts that $\ovl{\xi} \in \FullLabel(u)$ and $\mM \models \FullLabel(u)$. 
		
		Observe that, if $w \in V''$, $\Status(w) \notin \{\Unexpanded$, $\Unsat$, $\Blocked$, $\UnsatWrt(\ldots)\}$ and $\Type(w) = \NonState$, then $w$ must have a successor belonging to $V''$. Hence, if $\Status(v)$ is changed to $\Unsat$ or $\UnsatWrtP{u}$ by the instruction~2 of the rule $\rUnsat$, then, by the inductive assumption, it follows that $v \notin V''$. 
		
		Similarly, if $w \in V''$, $\Status(w) \notin \{\Unexpanded,\Unsat,\Incomplete\}$ and $\Type(w) = \State$, then all successors of $w$ must belong to $V''$. Hence, if $\Status(v)$ is changed to $\Unsat$ or $\UnsatWrtP{u}$ by the instruction~3 of the rule $\rUnsat$, then, by the inductive assumption, it follows that $v \notin V''$. 
		
		There remains the case when $\Status(v)$ is changed to $\UnsatWrtP{u}$ by the instruction 1b of the rule $\rUnsat$. For this case, it is sufficient to prove that:
		\begin{enumerate}
			\item every assertion of the form $a\!:\!\lDmd{A,q}\varphi$ or $a\!:\!\lDmd{\omega}\lDmd{A,q}\varphi$ in $\FullLabel(u)$ is $\Dmd$-realizable at $u$ w.r.t.~$u$,
			\item if $w \in V'$ and $f(w) \neq \emptyset$, then every formula of the form $\lDmd{A,q}\varphi$ or $\lDmd{\omega}\lDmd{A,q}\varphi$ in $\FullLabel(w)$ is $\Dmd$-realizable at $w$ w.r.t.~$u$.
		\end{enumerate}
		
		Consider the second assertion and suppose that $w \in V'$, $z \in f(w)$, $\xi \in \FullLabel(w)$ and $\xi$ is of the form $\lDmd{A,q}\varphi$ or $\lDmd{\omega}\lDmd{A,q}\varphi$. We have that $z \in (\Label(w))^\mM$. Consider the case when $\xi = \lDmd{A,q}\varphi$ (the other case is similar and omitted). Since $z \in (\Label(w))^\mM$, there exist $z_0,\ldots,z_k \in \Delta^\mM$ and a~word $\omega_1\ldots\omega_k$ accepted by $(A,q)$ such that $z_0 = z$, $z_k \in \varphi^\mM$ and, for each $1 \leq i \leq k$, if $\omega_i \in \mindices$ then $(z_{i-1},z_i) \in \omega_i^\mM$, else $\omega_i = (\psi_i?)$ for some $\psi_i$ and $z_{i-1} = z_i$ and $z_i \in \psi_i^\mM$. Such a realization (satisfaction) of $\xi$ at $z$ in $\mM$ is reflected by a sequence $(w_0,\xi_0), \ldots, (w_h, \xi_h)$ such that $w_0,\ldots,w_h \in V''$, $w_0 = w$, $\xi_0 = \xi$, $\xi_i$ is of the form $o_i\!:\!\lDmd{A,q_i}\varphi$ or $o_i\!:\!\lDmd{\omega'_i}\lDmd{A,q_i}\varphi$ (for $0 \leq i \leq h$), $\xi_h$ is $\Dmd$-realizable at $w_h$ w.r.t.~$u$ by the condition~\ref{item: YUDSS a} or~\ref{item: YUDSS e} of Definition~\ref{def: YUDSS}, and each $\xi_i$ with $0 \leq i < h$ is $\Dmd$-realizable at $w_i$ w.r.t.~$u$ because $\xi_{i+1}$ is $\Dmd$-realizable at $w_{i+1}$ w.r.t.~$u$, due to a condition among \ref{item: YUDSS b}, \ref{item: YUDSS c}, \ref{item: YUDSS f}--\ref{item: YUDSS k} of Definition~\ref{def: YUDSS}. For this claim, we use the inductive assumption and the fact that every path consisting of only non-states in $G$ is finite (Lemma~\ref{lemma: IUSNA}). As a consequence, $\xi$ is $\Dmd$-realizable at $w$ w.r.t.~$u$. 
		
		The first assertion in the above list can be proved analogously. 
		\myend
	\end{proof}
	
	\begin{corollary}[Soundness]\label{cor: Soundness}
		Let $\Gamma$ be an ABox in NNF and $G = (V,E,\nu)$ an arbitrary \CHPDL-tableau for~$\Gamma$. If $\Gamma$ is satisfiable, then $\Status(\nu) \neq \Unsat$.
	\end{corollary}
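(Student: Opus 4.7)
The plan is to upgrade Lemma~\ref{lemma: IDJSO}, which handles only complex states, to the root $\nu$, which is a complex non-state, by tracing a satisfiability-preserving path down through $G$.

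First I would fix a Kripke model $\mM \models \Gamma$; by a standard unfolding/filtration argument for PDL-like logics, one may assume $\mM$ is finitely branching, which is the hypothesis required by Lemma~\ref{lemma: IDJSO}. Call a complex node $v$ \emph{model-consistent} if $\mM \models \FullLabel(v)$ (this automatically validates the nominal equalities coded by $\Repl(v)$, because the assertions $\{a\!:\!b \mid \Repl(v)(b) = a\}$ sit inside $\FullLabel(v)$), and a simple node $v$ model-consistent if some state of $\mM$ satisfies $\FullLabel(v)$. By hypothesis, the root is model-consistent.

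Next, I would prove by induction on the time of status updates that no model-consistent node ever receives status $\Unsat$. A model-consistent complex state $u$ with $\Status(u) \neq \Incomplete$ is handled directly by Lemma~\ref{lemma: IDJSO}. All other cases reduce to a \emph{preservation} sub-lemma: whenever a rule fires at a model-consistent node, at least one resulting successor is model-consistent. The Boolean rules, the automaton rules $\rAutB, \rAutD, \rBox, \rDmd, \rBoxD, \rDmdD$, and the test rules $\rBoxQm, \rDmdQm$ are immediate from the Fischer--Ladner-style semantic characterization of $[\alpha]$ and $\lDmd{\alpha}$; $\rBoxTrans$ and $\rTrans$ are justified by the usual $\lDmd{\sigma}$-witness / $[\sigma]$-propagation argument; $\rReplNom$ is sound because the assertion $a\!:\!b$ forces $a^\mM = b^\mM$, making the substitution truth-preserving; $\rFormingState$ leaves the label untouched. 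Because chains of non-states are finite (Lemma~\ref{lemma: IUSNA}), following model-consistent successors always eventually reaches a complex state to which Lemma~\ref{lemma: IDJSO} applies, and the two propagation clauses of $\rUnsat$ then cannot fire along the model-consistent path back to $\nu$ (a non-state cannot have all successors $\Unsat$ while one of them is model-consistent; a state with an $\Unsat$ successor would itself become $\Unsat$, contradicting Lemma~\ref{lemma: IDJSO}).

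The main obstacle is the $\Incomplete$ case and its interaction with $\rReexpand$. When $\rNom$ fires at a model-consistent simple non-state $v'$ containing a nominal $a$, the witness state in $\mM$ must equal $a^\mM$, so every $a\!:\!\varphi$ with $\varphi \in \Label(v')$ is true in $\mM$. Hence the $\UnsatWrtP{u}$ branch of $\rNom$ cannot fire (its triggering $\xi \in X$ would be satisfied by $\mM$ while $\ovl{\xi} \in \FullLabel(u)$, contradicting model-consistency of $u$); only the $\Incomplete$ branch can fire, and it installs an $\AssSN(u)$ whose elements are all true in $\mM$. Consequently, when $\rReexpand$ later fires at the predecessor of $u$, the successor carrying $\Label(\cdot) \cup \AssSN(u)$ is again model-consistent, so the induction continues seamlessly through the re-expansion. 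Combining all cases yields $\Status(\nu) \neq \Unsat$.
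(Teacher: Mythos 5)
Your proposal is correct and takes essentially the same route as the paper: the paper's own proof of this corollary is a terse version of your argument — fix a finitely-branching model $\mM$ of $\Gamma$, trace a satisfiability-preserving path from $\nu$ down to a complex state $u$ with $\Status(u) \neq \Incomplete$ and $\mM \models \FullLabel(u)$, invoke Lemma~\ref{lemma: IDJSO} there, and propagate $\Status(u) \neq \Unsat$ back to the root — and you are merely filling in the preservation details the paper leaves implicit. One small imprecision: $\AssSN(u)$ need not consist of assertions true in $\mM$ (the triggering simple non-state of $\rNom$ may lie on a dead, non-model-consistent branch), but this is harmless because $\rReexpand$ also creates the successors with $\ovl{\xi}$ for each $\xi \in \AssSN(u)$, forming a complete case split, so at least one successor of a model-consistent predecessor is always model-consistent.
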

	
	\begin{proof}
		Assume that $\Gamma$ is satisfiable. It is well known that, if an ABox (in \HPDL) is satisfiable, then it has a finitely-branching Kripke model. Let $\mM$ be a finitely-branching Kripke model of $\Gamma$. Since $\Label(\nu) = \Gamma$ and $\mM \models \Gamma$, there exists a complex state $u$ of $G$ such that $\Status(u) \neq \Incomplete$ and $\mM \models \FullLabel(u)$. By Lemma~\ref{lemma: IDJSO}, $\Status(u) \neq \Unsat$. This implies that $\Status(\nu) \neq \Unsat$.
		\myend
	\end{proof}
	

	\subsection{Completeness}
	
	In this subsection, assume that $\Status(\nu) \neq \Unsat$ (where $\nu$ is the root of $G$). We prove that $\Gamma$ is satisfiable by constructing a {\em model graph} $G'$ of $G$ and a Kripke model $\mM$ that corresponds to~$G'$. 
	
	Since $\Status(\nu) \neq \Unsat$, there exists a complex state $u$ of $G$ such that $\Status(u) \notin \{\Unsat,\Incomplete\}$. In this subsection, we fix such a $u$. We also fix a sequential process of marking assertions/formulas of the form $o\!:\!\lDmd{A,q}\varphi$ or $o\!:\!\lDmd{\omega}\lDmd{A,q}\varphi$ in the full labels of nodes of $G$ as $\Dmd$-realizable w.r.t.~$u$. For $v \in V$ and $\xi \in \FullLabel(v)$ of one of these forms, let $\TSDR(\xi,v, u)$ denote the moment at which $\xi$ is marked as $\Dmd$-realizable at~$v$ w.r.t.~$u$. In the other case, $\TSDR(\xi,v, u)$ is undefined.  
	
	\begin{Definition}
		Let $v$ be a descendant of $u$ with $\Status(v) \notin \{\Unsat$, $\UnsatWrtP{u}\}$ and let $\xi \in \FullLabel(v)$ be of the form \mbox{$o\!:\!\lDmd{A,q}\varphi$} or \mbox{$o\!:\!\lDmd{\omega}\lDmd{A,q}\varphi$}. A {\em $\Dmd$-realization of $\xi$ at $v$ w.r.t.~$u$} is a sequence $(v_0,\xi_0), \ldots, (v_{k+1},\xi_{k+1})$ such that:
		\begin{itemize}
			\item $k \geq 0$, $v_0 = v$ and $\xi_0 = \xi$,
			\item for each $0 \leq i \leq k$, $\xi_i \in \FullLabel(v_i)$, $\xi_i$ is of the form $o_i\!:\!\lDmd{A,q_i}\varphi$ or $o_i\!:\!\lDmd{\omega_i}\lDmd{A,q_i}\varphi$ and is $\Dmd$-realizable at $v_i$ w.r.t.~$u$, 
			\item for each $0 \leq i < k$, $\TSDR(\xi_i,v_i,u) > \TSDR(\xi_{i+1},v_{i+1},u)$ and $\xi_i$ was marked at the moment $\TSDR(\xi_i,v_i,u)$ as $\Dmd$-realizable at $v_i$ w.r.t.~$u$ due to the $\Dmd$-realizability of $\xi_{i+1}$ at $v_{i+1}$ w.r.t.~$u$ according to the rules specified by the conditions \ref{item: YUDSS b}, \ref{item: YUDSS c}, \ref{item: YUDSS f}--\ref{item: YUDSS k} of Definition~\ref{def: YUDSS},
			\item one of the following two conditions holds:
			\begin{itemize}
				\item $\xi_k$ was marked at the moment $\TSDR(\xi_k,v_k,u)$ as $\Dmd$-realizable at $v_k$ w.r.t.~$u$ due to the rule specified by the condition~\ref{item: YUDSS e} of Definition~\ref{def: YUDSS}, $v_k$ was expanded by the tableau rule $\rDmdD$ with \mbox{$\xi_k = o_k\!:\!\lDmd{A,q_k}\varphi$} being the principal formula (having $q_k \in F_A$), $v_{k+1}$ is the successor of $v_k$ whose label is obtained from $\Label(v_k)$ by replacing $\xi_k$ with $\xi_{k+1} = o_{k+1}\!:\!\varphi$, and $\Status(v_{k+1}) \notin \{\Unsat,\UnsatWrtP{u}\}$; 
				\item $\xi_k$ was marked at the moment $\TSDR(\xi_k,v_k,u)$ as $\Dmd$-realizable at $v_k$ w.r.t.~$u$ due to the rule specified by the condition~\ref{item: YUDSS a} of Definition~\ref{def: YUDSS}, \mbox{$\xi_k = o_k\!:\!\lDmd{A,q_k}\varphi$}, $q_k \in F_A$, $o_{k+1}\!:\!\varphi \in \FullLabel(v_k)$, $v_{k+1} = v_k$ and $\xi_{k+1} = o_{k+1}\!:\!\varphi$. 
				\myEnd
			\end{itemize}
		\end{itemize}
	\end{Definition}
	
	\begin{lemma}\label{lemma: IUFJS}
	If $v$ is a descendant of $u$ with $\Status(v) \notin \{\Unsat$, $\UnsatWrtP{u}\}$, then every $\xi \in \FullLabel(v)$ of the form \mbox{$o\!:\!\lDmd{A,q}\varphi$} or \mbox{$o\!:\!\lDmd{\omega}\lDmd{A,q}\varphi$} has a $\Dmd$-realization at~$v$ w.r.t.~$u$.
	\end{lemma}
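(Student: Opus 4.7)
The plan is to first argue that $\xi$ must actually be marked as $\Dmd$-realizable at $v$ w.r.t.~$u$ during the fixed marking process, and then to extract a concrete $\Dmd$-realization by unfolding the justifications of successive markings along strictly decreasing timestamps, until a base case is reached.

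For the first step, I would argue by contradiction: suppose $\xi \in \FullLabel(v)$ is of one of the two forms but is not $\Dmd$-realizable at $v$ w.r.t.~$u$. Since $v$ is a descendant of $u$, $u$ is a complex state with $\Status(u) \notin \{\Unsat,\Incomplete\}$, and $\Status(v) \notin \{\Unsat,\UnsatWrtP{u}\}$, the precondition of instruction~1b of the rule $\rUnsat$ is satisfied for $v$ and $u$, so $\rUnsat$ would still be applicable; this contradicts the assumption that $G$ is a fully constructed \CHPDL-tableau. Hence $\xi$ is $\Dmd$-realizable at $v$ w.r.t.~$u$, and $\TSDR(\xi,v,u)$ is defined.

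For the second step, I would build the sequence $(v_0,\xi_0),(v_1,\xi_1),\ldots$ inductively by taking $(v_0,\xi_0) \defeq (v,\xi)$ and, at each step, inspecting the reason why $\xi_i$ was marked as $\Dmd$-realizable at $v_i$ w.r.t.~$u$ at time $\TSDR(\xi_i,v_i,u)$. If this marking was due to condition~\ref{item: YUDSS a} or~\ref{item: YUDSS e} of Definition~\ref{def: YUDSS}, I stop, taking $k \defeq i$ and defining $(v_{k+1},\xi_{k+1})$ exactly as prescribed by the corresponding clause of the definition of a $\Dmd$-realization. Otherwise the marking was due to one of the inductive clauses \ref{item: YUDSS b}, \ref{item: YUDSS c}, \ref{item: YUDSS f}--\ref{item: YUDSS k}, each of which explicitly witnesses a pair $(v_{i+1},\xi_{i+1})$ such that $\xi_{i+1}$ had already been marked as $\Dmd$-realizable at $v_{i+1}$ w.r.t.~$u$ before $\xi_i$ was; I pick this witness and continue. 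In particular, clause~\ref{item: YUDSS k} (the nominal-redirection clause) may force a jump back from a descendant of $u$ up to $u$ itself, but this is fine because $u$ is a complex state that is an ancestor of $v_i$ and clause~\ref{item: YUDSS k} still supplies a strictly earlier timestamp.

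The termination of this process, which is the main obstacle to worry about, follows from well-foundedness: the timestamps $\TSDR(\xi_i,v_i,u)$ are nonnegative integers that strictly decrease along the sequence by construction, so the induction must halt after finitely many steps, necessarily at one of the base clauses~\ref{item: YUDSS a} or~\ref{item: YUDSS e}. What has to be checked along the way is that every non-base clause invoked really does guarantee a pair with a strictly smaller $\TSDR$ that lies in $\FullLabel$ of a descendant of $u$ with status outside $\{\Unsat,\UnsatWrtP{u}\}$ — this follows directly from condition~1 of Definition~\ref{def: YUDSS} applied to the marked successor, plus the observation that all clauses \ref{item: YUDSS b}--\ref{item: YUDSS k} point to a node $w$ that is either $v_i$ itself, a successor of $v_i$, or $u$, all of which remain descendants of $u$. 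The resulting finite sequence satisfies, by construction, every clause of the definition of a $\Dmd$-realization of $\xi$ at $v$ w.r.t.~$u$.
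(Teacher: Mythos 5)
Your proposal is correct and follows essentially the same route as the paper, which simply asserts that having a $\Dmd$-realization is equivalent to being $\Dmd$-realizable; you additionally make explicit the two ingredients the paper leaves implicit, namely that $\xi$ must be marked $\Dmd$-realizable in the completed tableau (else instruction~1b of $\rUnsat$ would still be applicable) and that unfolding the marking justifications along strictly decreasing values of $\TSDR$ terminates in one of the base clauses~\ref{item: YUDSS a} or~\ref{item: YUDSS e}. This is a faithful, more detailed rendering of the paper's one-line argument.
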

	
	This lemma clearly holds, because $\xi$ has a $\Dmd$-realization at~$v$ w.r.t.~$u$ iff it is $\Dmd$-realizable at~$v$ w.r.t.~$u$.

	\begin{Definition}
		Let $v$ be a descendant of $u$ such that it is a simple non-state and $\Status(v) \notin \{\Unsat$, $\UnsatWrtP{u}\}$. A {\em saturation path} of $v$ w.r.t.~$u$ is a sequence $v_0$, $v_1$, \ldots, $v_k$ of nodes of $G$, with $v_0 = v$ and $k \geq 1$, such that:
		\begin{itemize}
			\item $\Status(v_i) \notin \{\Unsat$, $\UnsatWrtP{u}\}$ for all $0 \leq i \leq k$, 
			\item $\Type(v_i) = \NonState$ for all $0 \leq i < k$ and $\Type(v_k) = \State$, 
			\item $\tuple{v_i,v_{i+1}} \in E$ for all $0 \leq i < k-1$, 
			\item if there exists $a \in \Label(v_{k-1})$, then $v_k = u$, else $\tuple{v_{k-1},v_k} \in E$.
			\myEnd
		\end{itemize}
	\end{Definition}
	
	By Lemma~\ref{lemma: IUSNA}, each saturation path of $v$ w.r.t.~$u$ is finite. Furthermore, if $v_i$ is a simple non-state with $\Status(v_i) \notin \{\Unsat$, $\UnsatWrtP{u}\}$ and $\Label(v_i)$ does not contain any nominal, then $v_i$ has a successor $v_{i+1}$ with $\Status(v_{i+1}) \notin \{\Unsat$, $\UnsatWrtP{u}\}$. Therefore, we have the following lemma.
	
	\begin{lemma}\label{lemma: JDLAB}
		If $v$ is a descendant of $u$ such that it is a simple non-state and $\Status(v) \notin \{\Unsat$, $\UnsatWrtP{u}\}$, then it has at least one saturation path w.r.t.~$u$.
	\end{lemma}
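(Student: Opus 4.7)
The plan is to construct the saturation path iteratively, using the fact stated immediately before the lemma (that every simple non-state with acceptable status and nominal-free label has a good successor) together with Lemma~\ref{lemma: IUSNA} for termination. Before starting the induction I would record the minor observation that $\Status(u) \notin \{\Unsat, \UnsatWrtP{u}\}$: the first half is part of our standing assumption on $u$, and the second half follows from instruction~1a of the rule $\rUnsat$, which would otherwise have propagated a status of $\UnsatWrtP{u}$ on $u$ to $\Unsat$, contradicting $\Status(u) \neq \Unsat$.

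The induction begins with $v_0 := v$. Assuming inductively that $v_i$ is a simple non-state with $\Status(v_i) \notin \{\Unsat, \UnsatWrtP{u}\}$, I split into two cases. If $\Label(v_i)$ contains some nominal, I stop and set $k := i+1$ and $v_k := u$; the resulting sequence satisfies every defining condition of a saturation path, because the intermediate nodes $v_0,\ldots,v_{k-1}$ are already non-states with acceptable status, the edges between them were supplied by previous iterations, $v_k = u$ is a state with the required status by the preliminary observation, and the ``jump to $u$'' clause of the definition is precisely what permits the final transition since $\Label(v_{k-1})$ contains a nominal. Otherwise, the fact recalled before the lemma yields a successor $v_{i+1}$ with $\Status(v_{i+1}) \notin \{\Unsat, \UnsatWrtP{u}\}$. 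Because successors of a simple non-state are either simple non-states (under the static rules and $\rNom$/$\rReexpand$) or a simple state (under $\rFormingState$), the induction either terminates with $k := i+1$ when $v_{i+1}$ is a state, or continues with $v_{i+1}$ in place of $v_i$.

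Termination is guaranteed by Lemma~\ref{lemma: IUSNA}, which forbids an infinite chain of non-states, so the procedure must eventually hit either a state (ending Case~2) or a non-state whose label contains a nominal (triggering Case~1). In either case $k \geq 1$ and all saturation-path conditions are satisfied by construction. The only subtlety, and thus the main obstacle worth writing out carefully, is the opening verification that $\Status(u) \neq \UnsatWrtP{u}$ and the accompanying bookkeeping that a successor of a simple non-state is itself simple; once these are in hand, the rest is a routine induction built on the pre-stated fact and Lemma~\ref{lemma: IUSNA}.
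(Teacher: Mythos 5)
Your proof is correct and follows essentially the same route as the paper, which likewise derives the lemma from the finiteness of paths of non-states (Lemma~\ref{lemma: IUSNA}) together with the observation that a nominal-free simple non-state with acceptable status has a successor with acceptable status; your explicit check that $\Status(u) \neq \UnsatWrtP{u}$ is a sound and useful extra detail. (One cosmetic slip: successors of a simple non-state are produced by the static rules or $\rFormingState$, not by $\rNom$ or $\rReexpand$, but this does not affect the argument.)
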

	
	\begin{Definition}
	A {\em model graph} of $G$ w.r.t.~$u$ is a structure $G' = (V',E',\Label',\ELabels')$ constructed as follows, where $V' \subseteq \noms \cup \{v \in V \mid \SType(v) = \Simple$ and $\Type(v) = \State\}$, $E' \subseteq V' \times V'$, $\ELabels': E' \to P(\mindices)$, and for $x \in V'$, $\Label'(x)$ is a set of formulas:
	\begin{enumerate}
		\item $V' := \{a \in \noms \mid \Repl(u)(a) = a\}$;
		\item $E' := \{(a,b) \mid \textrm{there exists $\sigma(a,b) \in \Label(u)$}\}$;
		\item for each $a \in V'$, $\Label'(a) := \{\varphi \mid a\!:\!\varphi \in \FullLabel(u)\} \cup \{a\}$;
		\item for each $(a,b) \in E'$, $\ELabels(a,b) := \{\sigma \mid \sigma(a,b) \in \Label(u)\}$;
		\item $\realized := \emptyset$;
		\item while there exist $x \in V'$ and $\varphi = \lDmd{\sigma}\lDmd{A,q}\psi \in \Label'(x)$ such that $(x,\varphi) \notin \realized$, do:
		\begin{enumerate}
			\item\label{step: DSSFS 6a} if $x \in V$, then let $(v_0,\xi_0), \ldots, (v_{k+1},\xi_{k+1})$ be a $\Dmd$-realization of $\varphi$ at $x$ w.r.t.~$u$, else let $(v_0,\xi_0), \ldots, (v_{k+1},\xi_{k+1})$ be a $\Dmd$-realization of $x\!:\!\varphi$ at $u$ w.r.t.~$u$;
			\item\label{step: DSSFS 6b} if $\Type(v_{k+1}) = \State$, then let $l = 1$, else let $v_{k+1},\ldots,v_{k+l}$ be a saturation path of $v_{k+1}$ w.r.t.~$u$;
			\item\label{step: DSSFS 6c} let $i_1, \ldots, i_h$ be all the indices such that 
				$0 < i_1 < \ldots < i_h = k+l$,
				$\Type(v_{i_j}) = \State$ for $1 \leq j \leq h$,
				and for every $1 \leq j < h$, if $v_{i_j} = u$, then $\xi_{i_j}$ is of the form $a_j\!:\!\lDmd{\sigma_j}\lDmd{A,q_{i_j}}\psi$, else $\xi_{i_j}$ is of the form $\lDmd{\sigma_j}\lDmd{A,q_{i_j}}\psi$;
			\item if $v_{i_h} = u$, then let $a_h$ be a nominal such that $a_h \in \Label(v_i)$, where $i$ is the greatest index such that $i < i_h$ and $v_i \neq u$;
			\item $x_0 := x$, $\sigma_0 := \sigma$;
			\item for each $j$ from 1 to $h$, do:
			\begin{enumerate}
				\item if $v_{i_j} \neq u$, then: 
				\begin{itemize}
					\item $x_j := v_{i_j}$;
					\item if $x_j \notin V'$, then add $x_j$ to $V'$ and set $\Label'(x_j) := \FullLabel(v_{i_j-1})$;
					\item else $\Label'(x_j) := \Label'(x_j) \cup \RFormulas(v_{i_j-1})$;
					\item add $(x_{j-1},x_j)$ to $E'$ and $\sigma_{j-1}$ to $\ELabels(x_{j-1},x_j)$;
				\end{itemize}
				\item else: $x_j := a_j$,  
						add $(x_{j-1},x_j)$ to $E'$ and $\sigma_{j-1}$ to $\ELabels(x_{j-1},x_j)$;
			\end{enumerate}
			\item add $(x,\varphi)$ to $\realized$.
			\myend
		\end{enumerate}
	\end{enumerate}		
	\end{Definition}
	As invariants of the ``while'' loop in the above construction, we have that: 
	\begin{itemize}
		\item if $x \in V'$, then $\Status(x) \notin \{\Unsat$, $\UnsatWrtP{u}\}$;
		\item the ``let'' instruction at the step~\ref{step: DSSFS 6a} is well-defined due to Lemma~\ref{lemma: IUFJS}; 
		\item the ``let'' instruction at the step~\ref{step: DSSFS 6b} is well-defined due to Lemma~\ref{lemma: JDLAB}. 
	\end{itemize} 
	Also observe that:
	\begin{itemize}
		\item the ``let'' instruction at the step~\ref{step: DSSFS 6c} is well-defined; it specifies not only $i_1, \ldots, i_h$, but also $\sigma_j$ for all $1 \leq j < h$, and $a_j$ for all $1 \leq j < h$ such that $v_{i_j} = u$;
		\item since $G$ is finite (by Corollary~\ref{cor: JDKSA}), the ``while'' loop terminates and $G'$ is finite.
	\end{itemize} 
	
	The following lemma states that a model graph is similar to a Hintikka structure. It directly follows from the constructions of~$G$ and~$G'$.
	
	\begin{lemma}\label{lemma: HDIAK}
		Let $G' = (V',E',\Label',\ELabels')$ be a model graph of $G$ w.r.t.~$u$. Then, for every $x \in V'$ and every $\varphi \in \Label'(x)$:
		\begin{enumerate}
			\item $\bot \notin \Label'(x)$ and $\ovl{\varphi} \notin \Label'(x)$, 
			\item if $\varphi = \psi \land \chi$, then $\{\psi,\chi\} \subset \Label'(x)$,
			\item if $\varphi = \psi \lor \chi$, then $\psi \in \Label'(x)$ or $\chi \in \Label'(x)$,
			\item if $\varphi = a$, then $x = a$, 
			\item if $\varphi = [\alpha]\psi$, $\alpha \notin \mindices$, $\alpha$ is not a~test and $I_{\Aut_\alpha} = \{q_1,\ldots,q_k\}$,\\ then $\{[\Aut_\sigma,q_1]\psi,\ldots,[\Aut_\sigma,q_k]\psi\} \subset \Label'(x)$,
			\item if $\varphi = [A,q]\psi$ and $\delta_A(q) = \{(\omega_1,q_1)$, \ldots, $(\omega_k,q_k)\}$,\\ then $\{[\omega_1][A,q_1]\psi$, \ldots, $[\omega_k][A,q_k]\psi\} \subset \Label'(x)$,
			\item if $\varphi = [A,q]\psi$ and $q \in F_A$, then $\psi \in \Label'(x)$,
			\item if $\varphi = [\chi?]\psi$, then $\ovl{\chi} \in \Label'(x)$ or $\psi \in \Label'(x)$,
			\item if $\varphi = [\sigma]\psi$, $(x,y) \in E'$ and $\sigma \in \ELabels'(x,y)$, then $\psi \in \Label'(y)$,
			\item if $\varphi = \lDmd{\alpha}\psi$, $\alpha \notin \mindices$, $\alpha$ is not a~test and $I_{\Aut_\alpha} = \{q_1,\ldots,q_k\}$,\\ then $\{\lDmd{\Aut_\alpha,q_1}\psi, \ldots, \lDmd{\Aut_\alpha,q_k}\psi\} \cap \Label'(x) \neq \emptyset$, 
			\item if $\varphi = \lDmd{\chi?}\psi$, then $\{\chi,\psi\} \subseteq \Label'(x)$,
			\item if $\varphi = \lDmd{\sigma}\psi$, then there exists $y$ such that $(x,y) \in E'$, $\sigma \in \ELabels'(x,y)$ and $\psi \in \Label'(y)$,
			\item if $\varphi = \lDmd{A,q}\psi$, then there exist an accepting run $q_0,\ldots,q_k$ of the automaton $(A,q)$ on a word $\omega_1\ldots\omega_k$ (with $q_0 = q$ and $q_k \in F_A$) and a sequence $x_0,\ldots,x_k$ of nodes of $G'$ such that $x_0 = x$, $\psi \in \Label'(x_k)$ and, for each $1 \leq i \leq k$, if $\omega_i = (\chi_i?)$, then $x_i = x_{i-1}$ and $\{\chi_i,\lDmd{A,q_i}\psi\} \subseteq \Label'(x_i)$, else $(x_{i-1},x_i) \in E'$, $\omega_i \in \ELabels'(x_{i-1},x_i)$ and $\lDmd{A,q_i}\psi \in \Label'(x_i)$.
		\end{enumerate}
	\end{lemma}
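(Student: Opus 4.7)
The plan is to verify each of the thirteen conditions by case analysis on the form of $\varphi$ and on how $x$ enters $V'$. There are exactly two routes: either $x$ is a nominal $a$ with $\Repl(u)(a) = a$ (handled by steps~1--4 of the construction), or $x = v_{i_j}$ is a simple state added during some iteration of the ``while'' loop (step~6). In both cases, I would observe that $\Label'(x)$ is contained in $\FullLabel(w)$ for some tableau node $w$ whose immediate non-state predecessor is fully saturated, in the sense that no static tableau rule, $\rReplNom$, or $\rNom$ was applicable when $\rFormingState$ fired. Concretely, $\Label'(a)$ consists of formulas $\varphi$ with $a\!:\!\varphi \in \FullLabel(u)$ (plus the nominal $a$ itself), and $u$ is a complex state formed from such a saturated non-state predecessor; for $v_{i_j} \neq u$ the label is built from $\FullLabel(v_{i_j-1})$ or $\RFormulas(v_{i_j-1})$, with $v_{i_j-1}$ again expanded by $\rFormingState$. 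This saturation property is the backbone of the entire argument.

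For the ``static'' conditions~1--9 I would proceed rule by rule. Condition~1 follows because the alternative would have triggered $\rUnsat$ and set the status of the corresponding node to $\Unsat$ or $\UnsatWrtP{u}$, contradicting the invariant $\Status(x) \notin \{\Unsat, \UnsatWrtP{u}\}$ maintained throughout step~6. Conditions~2--8 correspond one-to-one with the static rules $\rAnd$, $\rOr$, $\rAutB$, $\rBox$, $\rBoxD$, $\rBoxQm$: because the non-state predecessor of $w$ is saturated, the principal formula has been reduced and its firm conclusions lie in $\FullLabel(w)$; for branching rules, the possible conclusion that survives is the one along the branch actually chosen in $G$. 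Condition~9 (universal $\sigma$-transitions) follows from rule $\rBoxTrans$ together with the fact that the only $E'$-edges created in steps~2 and~6c come from assertions $\sigma(a,b) \in \Label(u)$ or from $\rTrans$-transitions, both of which propagate $[\sigma]\psi$ into the successor's label.

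The ``existential'' conditions~10--13 rely crucially on the $\Dmd$-realization machinery invoked in step~6. Conditions~10 and~11 are immediate consequences of rules $\rAutD$ and $\rDmdQm$, which place one of the initial-state witnesses or the pair $\{\chi,\psi\}$ into the (full) label. For condition~12, taking a $\Dmd$-realization of $\lDmd{\sigma}\psi$ up to its first state and invoking step~6 of the construction yields an edge $(x,y) \in E'$ with $\sigma \in \ELabels'(x,y)$ and $\psi \in \Label'(y)$, with two sub-cases depending on whether the realization terminates at $u$ (routing through a nominal successor) or at a simple state. Condition~13 for $\lDmd{A,q}\psi$ is the deepest case: I would induct on the length of the $\Dmd$-realization, using the base clauses of Definition~\ref{def: YUDSS} to place $\psi$ in $\Label'(x_k)$, and the intermediate clauses to extract either a test step (giving $x_i = x_{i-1}$ with $\chi_i \in \Label'(x_i)$ via $\rDmdQm$) or a genuine $\sigma$-transition (via $\rTrans$, contributing one symbol to the automaton word and one edge to $G'$ as prescribed by step~6c).

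The main obstacle will be the bookkeeping for condition~13: matching a $\Dmd$-realization, which interleaves intra-node reductions ($\rDmd$, $\rDmdD$, $\rAutD$, $\rDmdQm$) with inter-state transitions via $\rTrans$, against a model-graph path $x_0,\ldots,x_k$ together with an accepting automaton run $q_0,\ldots,q_k$. I would need to verify that every $\rTrans$ step contributes exactly one alphabet symbol $\sigma$ and corresponds to precisely one edge in $G'$, that every test step contributes a symbol $\chi_i?$ aligned with the reduced formula $\lDmd{\chi_i?}\lDmd{A,q_i}\psi$ at that node, and that the degenerate configuration $v_{i_j} = u$ (where the realization returns to $u$ via a nominal $a_j$) is compatible with $\delta_A$. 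Once this alignment is established, the remaining conditions reduce to straightforward rule-by-rule verifications built on the saturation backbone.
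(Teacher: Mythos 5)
Your plan is correct and follows the same route as the paper, which in fact offers no argument beyond the remark that the lemma ``directly follows from the constructions of $G$ and $G'$''; your case analysis on how $x$ enters $V'$, the saturation of non-state predecessors under the static rules, $\rReplNom$ and $\rNom$, and the use of $\Dmd$-realizations and saturation paths for conditions 12--13 is precisely the elaboration the paper leaves implicit. The bookkeeping you flag for condition 13 (aligning a $\Dmd$-realization with an accepting run and the edges created in step~6c) is indeed the only non-routine part, and your outline handles it as intended.
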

	
	\begin{Definition}
		Let $G' = (V',E',\Label',\ELabels')$ be a model graph of $G$ w.r.t.~$u$. A Kripke model $\mM$ {\em corresponds} to $G'$ w.r.t.~$u$ if:
		\begin{itemize}
			\item $\Delta^\mM = V'$,
			\item $p^\mM = \{x \in V' \mid p \in \Label'(x)\}$ for $p \in \props$, 
			\item $\sigma^\mM = \{(x,y) \in E' \mid \sigma \in \ELabels'(x,y)\}$ for $\sigma \in \mindices$, 
			\item $a^\mM = \Repl(u)(a)$ for $a \in \noms$ with $\Repl(u)(a)$ specified.
			\myend
		\end{itemize}
	\end{Definition}
	
	Clearly, there exist Kripke models corresponding to $G'$ w.r.t.~$u$. They differ from each other only in interpreting nominals $a$ with $\Repl(u)(a)$ unspecified.
	
	\begin{lemma}\label{lemma: HIWJC}
	Let $G' = (V',E',\Label',\ELabels')$ be a model graph of $G$ w.r.t.~$u$ and $\mM$ a Kripke model corresponding to $G'$ w.r.t.~$u$. Then:
	\begin{enumerate}
		\item for every $x \in V'$ and every $\varphi \in \Label'(x)$, we have $x \in \varphi^\mM$,
		\item $\mM \models \FullLabel(u)$,
		\item $\mM \models \Label(\nu)$.
	\end{enumerate}
	\end{lemma}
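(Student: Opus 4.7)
The plan is to prove the three assertions in the stated order, with assertion~(1) --- a Hintikka-style truth lemma --- carrying the main technical burden; assertions~(2) and~(3) then follow by linking $\mM$ first to $\FullLabel(u)$ and then back to $\Label(\nu)$ along the ancestor chain from $u$ to the root.

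For assertion~(1), I would proceed by induction on an appropriate complexity measure for $\varphi$ (essentially the Fischer--Ladner-style size that counts subprograms and the formulas appearing inside tests). The Boolean, test, nominal-atom, and atomic-modality cases are immediate from the corresponding clauses of Lemma~\ref{lemma: HDIAK}: clause~4 combined with $a^\mM = \Repl(u)(a) = a$ for $a \in V'$ settles the nominal case; clause~9 combined with the definition $\sigma^\mM = \{(x,y) \in E' \mid \sigma \in \ELabels'(x,y)\}$ handles $[\sigma]\psi$; and clause~12 supplies the witness for $\lDmd{\sigma}\psi$. The non-atomic cases $[\alpha]\psi$ and $\lDmd{\alpha}\psi$ reduce via clauses~5 and~10 to the automaton-modal single-state forms.

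The main obstacle is the pair of cases $[A,q]\psi$ and $\lDmd{A,q}\psi$, since the unfoldings provided by clauses~6 and~13 do not decrease the size of the principal formula. I would resolve $\lDmd{A,q}\psi$ by invoking clause~13 directly: it produces an accepting run $q_0,\ldots,q_k$ and a sequence $x_0,\ldots,x_k$ in $G'$ with $\psi \in \Label'(x_k)$, and the outer hypothesis applied to $\psi$ and to the test formulas $\chi_i$ (all smaller than $\varphi$ in the chosen measure) yields $x_k \in \psi^\mM$ and the required test satisfactions; concatenating the atomic edges and tests produces a witness for $x \in (\lDmd{A,q}\psi)^\mM$. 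For $[A,q]\psi$, I would prove a companion claim by a subsidiary induction on the length of an arbitrary $(A,q)$-run from $x$ to some $y$ in $\mM$: clauses~6, 7, 8, and~9, together with the outer hypothesis on the test formulas, propagate $[A,q_i]\psi$ along the run and deliver $\psi \in \Label'(y)$ at an accepting state, whence the outer hypothesis on $\psi$ gives $y \in \psi^\mM$. Termination of the inner induction is assured by the finite length of the run.

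With assertion~(1) in hand, assertion~(2) reduces to a case analysis on the members of $\FullLabel(u)$. Each $a\!:\!\varphi$ with $\varphi$ not a nominal places $\varphi$ into $\Label'(a)$ by the third clause of the construction of $G'$, so assertion~(1) delivers $a^\mM \in \varphi^\mM$; each $\sigma(a,b) \in \Label(u)$ contributes an edge $(a,b)$ with $\sigma \in \ELabels'(a,b)$, yielding $(a^\mM,b^\mM) \in \sigma^\mM$; and the artificial assertions $a\!:\!b$ arising from $\Repl(u)(b) = a$ are immediate from $b^\mM = \Repl(u)(b) = a = a^\mM$. For assertion~(3) I would argue by backward induction along the chain of complex ancestors $\nu = v_0, v_1, \ldots, v_k = u$, showing that every rule application leading from $v_i$ to $v_{i+1}$ preserves satisfaction of $\Label(v_i)$ under $\mM$. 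The static rules, $\rBoxTrans$, $\rFormingState$, and $\rReexpand$ are handled by their standard soundness schemas on the branch that actually leads to $u$. The subtle step is $\rReplNom$: because $\mM$ interprets every nominal through $\Repl(u)$ and the chain of replacements composes down the ancestor path, any nominal $b$ replaced by $a$ along the way satisfies $b^\mM = a^\mM$ in $\mM$, which is precisely what is needed to lift satisfaction of the replaced assertion back to the original one. Iterating to the root then produces $\mM \models \Label(\nu) = \Gamma$, completing the proof.
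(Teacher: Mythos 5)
Your proposal is correct and follows essentially the same route as the paper: a truth lemma for assertion~(1) by induction on (a suitably measured) structure of $\varphi$ using Lemma~\ref{lemma: HDIAK}, assertion~(2) from the initialization of $G'$ and the interpretation of nominals, and assertion~(3) by propagating satisfaction backward along a path from $u$ to $\nu$. The paper's proof is only a three-sentence sketch, and your elaboration of the automaton-modal cases (clause~13 for $\lDmd{A,q}\psi$ and a subsidiary run-length induction for $[A,q]\psi$) correctly fills in the details it leaves implicit.
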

	
	\begin{proof}
	The first assertion can be proved in a straightforward way by induction on the structure of $\varphi$ using Lemma~\ref{lemma: HDIAK}. The second assertion follows from the first one, the initialization of the construction of $G'$ and the interpretation of nominals in~$\mM$. The third assertion follows from the second one. Namely, there exists a path $v_0,\ldots,v_k$ in $G$ such that $v_0 = \nu$ and $v_k = u$, and by the applied tableau rules, for every $i$ from $k-1$ down to 0, $\mM \models \FullLabel(v_i)$ follows from $\mM \models \FullLabel(v_{i+1})$.
	\myend
	\end{proof}
		
	\begin{corollary}[Completeness]\label{cor: Completeness}
		Let $\Gamma$ be an ABox in NNF and $G = (V,E,\nu)$ an arbitrary \CHPDL-tableau for~$\Gamma$. If $\Status(\nu) \neq \Unsat$, then $\Gamma$ is satisfiable.
	\end{corollary}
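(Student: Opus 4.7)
The plan is to leverage the machinery already established in this subsection: the notions of model graph and corresponding Kripke model, together with Lemmas~\ref{lemma: IUFJS}, \ref{lemma: JDLAB} and~\ref{lemma: HIWJC}. The overall strategy is to pick a complex state $u$ that ``witnesses'' satisfiability of the root, build a model graph of $G$ w.r.t.~$u$, realise it as a Kripke model, and verify that this model satisfies $\Gamma$.

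First, I would derive from the assumption $\Status(\nu) \neq \Unsat$ the existence of a complex state $u$ of $G$ with $\Status(u) \notin \{\Unsat,\Incomplete\}$ --- precisely the $u$ fixed at the beginning of this subsection. Starting at $\nu$, I would repeatedly descend to a successor whose status is not $\Unsat$; such a successor always exists by items~\ref{item: JHCDS}--\ref{item: JHDWA} of rule $\rUnsat$ applied contrapositively at non-states, and by item~\ref{item: JIRSA} at states. By Lemma~\ref{lemma: IUSNA} this descent reaches a state after finitely many non-state steps, and from the shape of the tableau rules the first such state is in fact complex; the case $\Status(u) = \Incomplete$ is ruled out because on a fully expanded tableau no such state can still have an incoming edge (otherwise $\rReexpand$ would remain applicable to its predecessor).

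Next, I would invoke the explicit construction of a model graph $G' = (V',E',\Label',\ELabels')$ of $G$ w.r.t.~$u$. The ``while'' loop is well-defined because Lemma~\ref{lemma: IUFJS} supplies the $\Dmd$-realizations required at step~\ref{step: DSSFS 6a}, Lemma~\ref{lemma: JDLAB} supplies the saturation paths at step~\ref{step: DSSFS 6b}, and finiteness of $G$ (Corollary~\ref{cor: JDKSA}) ensures termination. Choosing any Kripke model $\mM$ corresponding to $G'$ w.r.t.~$u$, the third assertion of Lemma~\ref{lemma: HIWJC} then yields $\mM \models \Label(\nu) = \Gamma$, so $\Gamma$ is satisfiable. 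The main obstacle is not the corollary itself --- essentially a repackaging of Lemma~\ref{lemma: HIWJC} --- but the verification, already absorbed into the preceding lemmas, that the $\Dmd$-realizations and saturation paths along which $G'$ is built stay inside nodes whose status is neither $\Unsat$ nor $\UnsatWrtP{u}$; that invariant rests on the precise formulation of Definition~\ref{def: YUDSS} and on the careful bookkeeping of $\UnsatWrt(U)$ sets by the rule $\rUnsat$, both of which were designed exactly to make this completeness argument go through.
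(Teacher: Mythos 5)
Your proposal is correct and takes essentially the same route as the paper, which likewise derives the corollary directly from the third assertion of Lemma~\ref{lemma: HIWJC} applied to a Kripke model corresponding to a model graph of $G$ w.r.t.\ the fixed complex state $u$; you merely spell out the existence of $u$ (the descent from $\nu$ through non-$\Unsat$ successors and the non-applicability of $\rReexpand$ at termination), which the paper states without proof at the start of the subsection.
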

	
	This corollary follows from the third assertion of Lemma~\ref{lemma: HIWJC} (since $\Label(\nu) = \Gamma$).
	

\section{Concluding Remarks}\label{section: conc}

We have given the first direct tableau procedure with the \EXPTIME complexity for deciding HPDL and proved that it is sound and complete. The procedure uses global caching, a technique that not only guarantees the \EXPTIME complexity, but also increases efficiency. 
As HPDL can be used as a description logic for representing and reasoning about terminological knowledge, our procedure is useful for practical applications. 

In our decision procedure, any expansion strategy can be used for constructing a tableau. One can give the instruction 1(a) of the rule $\rUnsat$ the highest priority and give unary static rules a higher priority than for non-unary static rules. One may choose the depth-first expansion strategy, globally cache only simple nodes, keep complex nodes only for the current path of complex nodes, and is still guaranteed to have the \EXPTIME complexity for the algorithm. Checking fulfillment of eventualities can be done on-the-fly as in~\cite{AbateGW09} or periodically for the whole graph or at special moments for subgraphs (for example, when the subgraph rooted at a node has been ``fully expanded'' and no $\Dmd$-realization goes out from that subgraph). 

Our decision procedure has been designed to simplify the presentation and leaves space for improvement. 
It has been implemented for the TGC2 system~\cite{TGC2} with various optimizations. For example, a sequence of expansions by unary static rules is done in one step to eliminate intermediate non-states with only one successor, automata in modal operators are minimized, and different control strategies (for expanding the constructed tableau) are mixed. As TGC2 aims to allow efficient automated reasoning in a large class of modal and description logics, its implementation is time-consuming. A few intended important optimization techniques like propagation of unsatisfiability cores or compacting ABoxes by using bisimilarity were not implemented for TGC2 yet. In general, TGC2 still needs improvements, at least with respect to functionality. We refer the reader to~\cite{TGC2} for more details about this system. 

Our tableau method can be extended for Graded HPDL using the techniques from~\cite{GPDL} and for Converse-HPDL using the techniques from~\cite{GoreW10,nCPDLreg-AMSTA,SHIO}.




\end{document}